\documentclass[%
    aps,
    pra,
    twocolumn,
    superscriptaddress,
    longbibliography,
    nofootinbib,
    10pt
]{revtex4-2}

\usepackage{graphicx}
\usepackage{amsmath,amssymb,amsfonts,amsthm}
\usepackage{braket}
\usepackage{enumitem}
\usepackage[dvipsnames]{xcolor}
\usepackage[ruled,vlined,linesnumbered]{algorithm2e}
\SetAlCapFnt{\small}
\SetAlCapNameFnt{\small}
\SetAlgoCaptionSeparator{.}
\usepackage{hyperref}
\usepackage{mathtools}
\usepackage{comment} 
\usepackage{hyperref}

\DeclareMathOperator*{\argmin}{arg\,min}
\newcommand{\tr}{\text{tr}}

\newtheorem{theorem}{Theorem}
\newtheorem{lemma}[theorem]{Lemma}
\newtheorem{proposition}[theorem]{Proposition}
\newtheorem{corollary}[theorem]{Corollary}

\newtheorem{definition}{Definition}

\begin{document}

\title{Efficiently Learning Global Quantum Channels with Local Tomography}
  \newcommand{\mpq}{Max Planck Institute of Quantum Optics, 85748 Garching, Germany}
    \newcommand{\mcqst}{Munich Center for Quantum Science and Technology (MCQST), 80799 \ Munich, Germany}

\author{Zidu Liu}
\email{zidu.liu@mpq.mpg.de}
\affiliation{\mpq}

\author{Dominik S. Wild}
\affiliation{\mpq}
\affiliation{\mcqst}

\date{\today}

\begin{abstract}
Scalable characterization of quantum processors is crucial for mitigating noise and imperfections. While randomized measurement protocols enable efficient access to local observables, inferring a globally consistent description of multi-qubit processes remains challenging. Here we introduce a local-to-global reconstruction framework for one-dimensional multi-qubit states and channels. The method is efficient provided that correlations, as quantified by the conditional mutual information, decay exponentially. In particular, we prove that under this assumption, the required number of samples scales polynomially with the system size and the desired global reconstruction error. Our approach is based on combining local shadow tomography with locally optimal recovery maps obtained by convex optimization.
We supplement these rigorous guarantees by studying the performance of the protocol numerically for a system evolving under a local Lindbladian and a noisy, shallow circuit. By employing a tensor networ representation, we reconstruct channels acting on up to 50 qubits and accurately recover global diagnostics such as the process fidelity, the Choi state purity, and Pauli-weight-resolved process matrix elements. Our work thus extends the powerful toolbox local shadow tomography to scalable channel characterization with access to global properties.
\end{abstract}

\maketitle

%%%%%%%%%%%%%%%%%%%%%%%%%%%%%%%%%%%%%%%%%%%%%%%%%%%%%%%%%%%%
\section{Introduction}

The development of quantum technologies has been progressing at a rapid pace. Leading quantum computing platforms have now reached sizes of more than 100 qubits~\cite{bohnet2016quantum,ebadi2021quantum,kim2023evidence,guo2024site} and have carried out computational tasks that are beyond the reach of classical computers~\cite{arute2019quantum,zhong2020quantum,wu2021strong,madsen2022quantum}. The ability to run quantum circuits with minimal noise on these devices crucially relies on their accurate characterization and calibration. This poses a formidable practical challenge since the systems are too large for complete tomography or full device-scale modeling and simulation. The challenge is expected to become even more significant as devices continue to grow in size and complexity on the path towards quantum error correction and fault-tolerant quantum computation.

Randomized schemes such as randomized benchmarking~\cite{magesan2011scalable,gambetta2012characterization,mckay2020correlated} and shadow tomography~\cite{Sarovar2020detectingcrosstalk,helsen2022framework,helsen2023shadow,huang2023learning,kunjummen2023shadow,levy2024classical,wang2024robust} address part of this challenge. However, randomized benchmarking probes only a restricted family of structured random sequences and reports an averaged error rate, which may hide important gate- and context-dependent effects such as crosstalk between particular operations. While shadow tomography provides access to a larger class of observables, it necessitates deeper circuits to probe global properties. In this work, we overcome this limitation by making a physically motivated assumption that allows us to reconstruct the global state from local measurements, thereby providing access to nonlocal properties including the purity, entanglement entropy, and individual components of a process matrix.

\begin{figure*}
\centering
\includegraphics[width=1\textwidth]{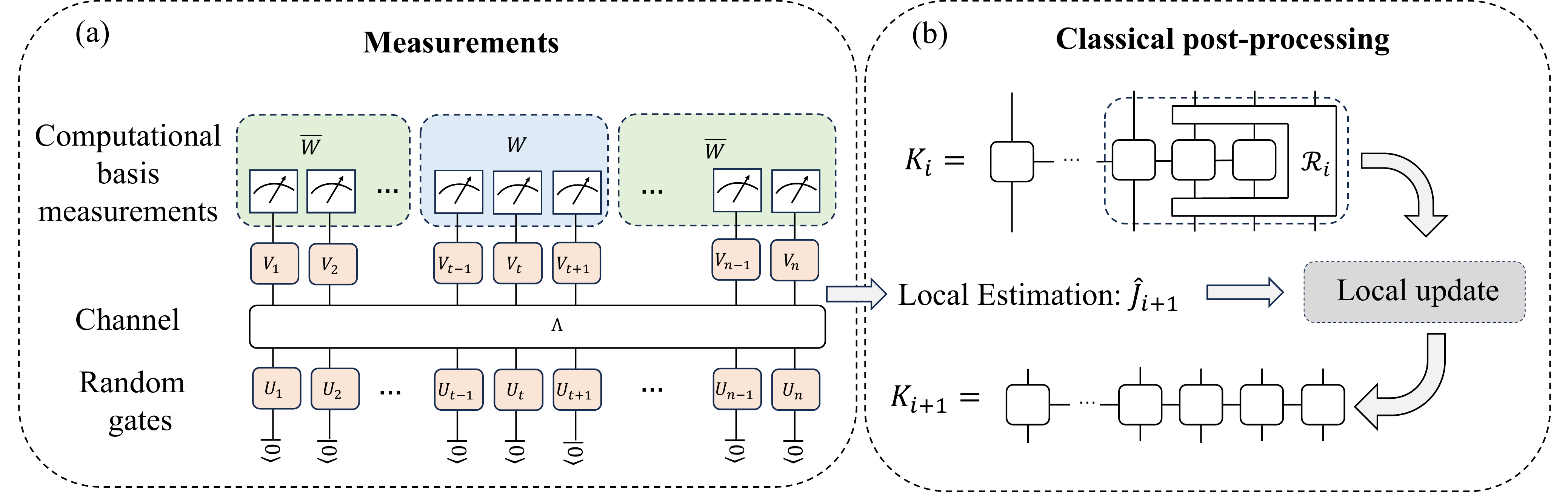}
\caption{Protocol for process matrix reconstruction. (a)~In each experimental shot, we prepare $|0\rangle^{\otimes n}$, apply independently sampled single-qubit gates $U_1,\ldots,U_n$ before the unknown channel $\Lambda$ and $V_1,\ldots,V_n$ prior to measurement in the computational basis. (b)~The measurements are used to estimate the reduced Choi states $\hat{J}_i$ on small subsystems. From this information, we compute the recovery maps $\mathcal{R}_i$, acting on sites $\{i-w+1, \ldots, i\}$, that incrementally append the new site $i+1$. The intermediate states supported on sites $\{1\, \ldots, i\}$ are represented as MPOs and are denoted by $K_i$.}
\label{fig:figure1}
\end{figure*}

The key assumption underlying our approach is a locality condition formulated in terms of the conditional mutual information (CMI). We focus on one-dimensional systems and assume that the CMI decays exponentially with the size of the conditioning subsystem. This assumption can be rigorously justified for Gibbs states of local Hamiltonians~\cite{kato2019quantum,kuwahara2025clustering,chen2025quantum} and we expect it to hold more generally for shallow circuits in the presence of weak noise. The formal role of this assumption is to allow us to patch together information from small subsystems with controlled error, resulting in the the protocol summarized in Fig.~\ref{fig:figure1}. We first perform tomography on subsystems that are large compared to the decay length of the CMI. From this information, we then reconstruct the global state using so-called recovery maps. We prove that $\text{poly}(n, 1/\varepsilon)$ samples are sufficient to reconstruct a (mixed) quantum state on $n$ qubits with an error less than $\varepsilon$ in trace norm. The reconstructed state can be represented by a matrix product operator (MPO)~\cite{verstraete2004matrix,cirac2021matrix,pirvu2010matrix}, which permits the efficient computation of a wide variety of properties.

We showcase the efficiency of the approach by applying it to the problem of process tomography. In this context, our method recovers the Choi state corresponding to the probed quantum channel~\cite{jamiolkowski1972linear}. When available, process tomography is a powerful tool to gain fine-grained information about quantum gates, which may help to identify the cause of noise and imperfections. In practice, however, process tomography is limited to very small systems owing to the fact that the process matrix describing a channel acting on $n$ qubits is composed of $16^n$ parameters~\cite{chuang1997prescription,poyatos1997complete}. We are able to accurately reconstruct the process matrix of a channel acting on $50$ qubits from estimates of reduced channels acting on blocks of 3 adjacent qubits and observe that our method works well even beyond the regime where the rigorous error bounds apply.

Before proceeding with the details of our work, we briefly comment on connections to related studies. In the setting of process tomography, the present framework accesses regimes beyond those previously treated for unitary channels~\cite{huang2024learning}, Pauli channels~\cite{HarperEfficient2020} and for channels described by Kraus operators with low Pauli weight~\cite{CrupiEfficient2025}. Moreover, when used for process tomography, our method does not require an assumption of Markovianity, in contrast to Lindbladian tomography~\cite{StilckFrancaEfficient2024,OlsacherHamiltonian2025}. From a tensor-network perspective, our protocol can be viewed as a procedure for recovering an MPO representation of a quantum state under the assumption of exponentially decaying CMI, hence complementing other lines of work on MPO tomography. For instance, heuristic schemes have been proposed that directly fit tensor-network models to experimental data~\cite{TorlaiQuantum2023,mangini2024tensor,guo2024quantum}. There have additionally been several works providing rigorous guarantees under different assumptions. One such work assumes approximate factorization conditions that can be phrased in terms of R\'enyi-type approximate Markov behavior, and its theoretical guarantees are primarily expressed in Hilbert--Schmidt norm rather than trace norm~\cite{VottoLearning2025}. Another recent work is based on a formal assumption concerning the spectral properties of MPO tensors, which, however, lacks a simple physical interpretation~\cite{FanizzaLearning2025}. The reconstruction in the case of exponentially decaying CMI has been previously considered for the Belavkin-Staszewski relative entropy~\cite{AlhambraConditional2024} and for the von Neumann entropy~\cite{ScaletClassical2025}. In contrast to the latter work, our analysis uses a locally optimal recovery strategy, and we provide tighter error bounds leading to a quadratic improvement in the scaling of the sample complexity.

%%%%%%%%%%%%%%%%%%%%%%%%%%%%%%%%%%%%%%%%%%%%%%%%%%%%%%%%%%%%
%%%%%%%%%%%%%%%%%%%%%%%%%%%%%%%%%%%%%%%%%%%%%%%%%%%%%%%%%%%%

\section{Setting}
\label{sec:setting}
We consider an unknown noisy quantum channel $\Lambda$ acting on $n$ spins, each represented by a local Hilbert space $\mathbb{C}^d$. For simplicity, we will focus on the case of qubits ($d=2$), although our results can be readily generalized to any constant local dimensions $d$.
Let $\sigma_0 = \mathbb{I}$ and $\sigma_1, \sigma_2, \sigma_3$ denote the single-qubit Pauli operators.
For any $\boldsymbol{\mu} \in \{0, 1, 2, 3\}^n$, we define the $n$-qubit Pauli operator $P_{\boldsymbol{\mu}} = \sigma_{\mu_1} \otimes \cdots \otimes \sigma_{\mu_n}$.
We will often represent the channel in terms of the process matrix $\chi$, which is the unique $4^n \times 4^n$ matrix such that
\begin{equation}
    \Lambda(\rho) = \sum_{\boldsymbol{\alpha}, \boldsymbol{\beta}} \chi_{\boldsymbol{\alpha} \boldsymbol{\beta}} P_{\boldsymbol{\alpha}} \rho P_{\boldsymbol{\beta}} .
\end{equation}

It is further convenient to use the Choi--Jamio\l{}kowski representation of a channel~\cite{choi1975completely,jamiolkowski1972linear}. To this end, we consider a doubled local Hilbert space $\mathbb{C}^d \otimes \mathbb{C}^d$, which supports the maximally entangled state $|\Phi\rangle = \frac{1}{\sqrt d}\sum_{k=1}^{d} |k\rangle_{\mathrm{in}}\otimes|k\rangle_{\mathrm{out}}$,
where the subscripts serve as labels for the Hilbert spaces. Given a channel $\Lambda$, we define the Choi state
\begin{equation}
    J(\Lambda) = (\mathcal{I}_{\mathrm{in}}\otimes \Lambda_{\mathrm{out}}) \bigl(|\Phi^{(n)}\rangle\!\langle\Phi^{(n)}|\bigr),
    \label{eq:choi-def}
\end{equation}
where $|\Phi^{(n)}\rangle = |\Phi\rangle^{\otimes n}$ and $\mathcal{I}$ is the identity map. By the Choi--Jamio\l{}kowski isomorphism, the Choi state is uniquely determined by the channel and vice versa. Moreover, for any completely positive and trace-preserving (CPTP) channel, the Choi state is positive and satisfies $\tr_\mathrm{out}[J(\Lambda)]= \mathbb{I}_\mathrm{in} / d^n$.
The process matrix $\chi$ corresponds to a representation of a Choi state in the orthonormal basis $\ket{\boldsymbol{\alpha}} = (P_{\boldsymbol{\alpha}, \mathrm{in}} \otimes \mathbb{I}_\mathrm{out}) \ket{\Phi^{(n)}}$, i.e., $\chi_{\boldsymbol{\alpha} \boldsymbol{\beta}} = \braket{\boldsymbol{\alpha} | J(\Lambda) | \boldsymbol{\beta} }$.

Experimentally, we access $\Lambda$ via the protocol shown in Fig.~\ref{fig:figure1}. In each shot we prepare the state $|0\rangle^{\otimes n}$, to which we apply independently sampled single-qubit gates $U=\bigotimes_{j=1}^n U_j$. We then apply the channel $\Lambda$ followed by another layer of single-qubit gates $V=\bigotimes_{j=1}^n V_j$ before measuring each qubit in the computation basis to obtain an outcome bitstring $\mathbf{b}$. Each shot is thus fully described by the triple $(U, V, \mathbf{b})$. Our protocol works, in principle, for any set of single-qubit unitaries that is tomographically complete. For concreteness, we choose each $U_i$ such that it prepares an eigenstate of the three single-qubit Pauli operators. Similarly, each $V_j$ rotates the basis of the subsequent measurement into the eigenbasis of one of the three Pauli operators.

To keep the protocol scalable, we avoid manipulating the full $n$-qubit channel directly and instead reconstruct reduced channels on small, overlapping windows. For a window $W\subseteq\{1,\dots,n\}$ with complement $\overline W$, we define
the reduced channel $\Lambda^{(W)}$ as the action of $\Lambda$ on $W$ when $\overline W$ is initialized in the maximally mixed state and finally traced out. Explicitly, for any state $\sigma$ supported on $W$,
\begin{equation}
    \Lambda^{(W)}(\sigma) =\tr_{\overline{W}} \left[ \Lambda \left( \sigma \otimes \frac{1}{d^{|\overline{W}|}} \mathbb{I}_{\overline{W}} \right) \right] .
    \label{eq:reduced-channel}
\end{equation}
The corresponding Choi state satisfies $J(\Lambda^{(W)}) = \tr_{\overline W} [J(\Lambda)]$, where the trace is performed over both the ``in'' and the ``out'' space associated with subsystem $\overline W$. We estimate the reduced Choi state or, equivalently, the reduced channel $\Lambda^{(W)}$ described by the process matrix $\chi^{(W)}$ using process classical shadows under local Pauli-basis randomization~\cite{kunjummen2023shadow,levy2024classical} (see Appendix~\ref{app:access-reduced-channel} for the explicit single-shot estimators and the reconstruction formula).

In practice, it is only possible to determine the reduced channels of small subsystems as the sample complexity and the computational complexity of the tomography increases exponentially with the size of $W$. Our goal is to nevertheless reconstruct the global process matrix $\chi$ by gluing together the locally reconstructed windows. In general, however, local marginals need not have a unique global extension (the quantum state / channel marginal problem~\cite{liu2006consistency,liu2007quantum,schilling2014quantum,hsieh2022quantum,broadbent2019zero}) such that an additional assumption is necessary. We will show below that under the following assumption that the conditional mutual information (CMI) decays exponentially with the size of the conditioning system, it is possible to efficiently reconstruct the global process matrix.

To formally state the assumption, we introduce further notation (see Appendix~\ref{Appendix:DefinitionsNotation} for details). Given a state $\rho$ supported on three disjoint subsystems $A$, $B$, and $C$, the CMI between $A$ and $C$ conditioned on $B$ is defined as $I_\rho(A : C \mid B) = S_\rho(A B) + S_\rho(B C) - S_\rho(B) - S_\rho(A B C)$, where $S_\rho(A)$ is the von Neumann entropy of the reduced state $\rho(A)$ on subsystem $A$. When $\rho = J(\Lambda)$ is a Choi state and the subsystems are specified in terms of the physical sites, it is understood that both the ``in'' and ``out'' parts are included for each site. Our key assumption may thus be stated as follows.
\begin{definition}[Exponentially decaying CMI]
    \label{def:exp_cmi_main}
    Consider three subsystems $A, B, C \subset \{1, 2, \ldots, n \}$ with the definite ordering $i_A < i_B < i_C$ for all $i_A \in A$, $i_B \in B$, $i_C \in C$. We say that a channel $\Lambda$ acting on sites $\{1, 2, \ldots, n \}$ has exponentially decaying CMI if for any such subsystems there exists constants $a$ and $\xi$ such that the Choi state $J(\Lambda)$ satisfies
    \begin{equation}
        I_{J(\Lambda)}(A : C \mid B) \leq a \, e^{- |B| / \xi} .
    \end{equation}
\end{definition}

%%%%%%%%%%%%%%%%%%%%%%%%%%%%%%%%%%%%%%%%%%%%%%%%%%%%%%%%%%%%

\section{Learning protocol}
\label{sec:global_reconstruction}

States satisfying the exponentially decaying CMI condition are commonly referred to as approximate Markov states~\cite{SutterApproximate2018}, in reference to exact Markov states for which the CMI vanishes. Given an exact Markov state $\rho$ for which $I_\rho(A : C \mid B) = 0$, there exists a so-called Petz recovery map $\mathcal{R}_{B \to BC}$ acting on subsystem $B$ and extending it to $B C$ such that $\rho(ABC) = (\mathcal{I}_A \otimes \mathcal{R}_{B \to BC}) [\rho(A B)]$~\cite{PetzSufficient1986,PetzMonotonicity2003}. Crucially, the Petz recovery map can be expressed in terms of the reduced state $\rho(B C)$ and does not require knowledge about subsystem $A$. A similar map exists for approximate Markov states but the recovery is no longer exact, incurring an error proportional to $\sqrt{I_\rho(A : C \mid B)}$~\cite{fawzi2015quantum,JungeUniversal2018}.

Our strategy is to use such recovery maps to reconstruct the Choi states describing the global channel $\Lambda$. We proceed sequentially, adding one site  at each step. Concretely, given the reduced Choi state supported on sites $\{1, \ldots, i\}$, we apply a recovery map $\mathcal{R}_i$ to extend the support to $\{1, \ldots, i+1\}$. The recovery map is assumed to only act on sites $\{i-w+1, \ldots, i\}$. We refer to $w$ as the reconstruction width. In the notation of the previous paragraph, the subsystems $A$, $B$, and $C$ are given by $A = \{1, \ldots i-w\}$, $B = \{i-w+1, \ldots, i\}$, and $C = \{i+1\}$. Owing to the exponentially decaying CMI, the reconstruction error can be made arbitrarily small by increasing $w$.

A key technical challenge is to find suitable recovery maps. If the reduced Choi state on sites $\{i-w+1, \ldots, i, i+1\}$ were known exactly, we could directly compute the rotated Petz map introduced in Ref.~\cite{JungeUniversal2018}, which guarantees a bounded reconstruction error. In practice, however, the reduced state can only be known approximately owing to the uncertainty of local tomography with a finite number of samples. A recent result~\cite{ScaletClassical2025} overcomes this challenge by showing that the rotated Petz map can be made stable against shot noise. Here, we pursue a different approach because the computation of the rotated Petz map is cumbersome. Moreover, despite its formal guarantees, the rotated Petz map is often far from optimal and may perform worse than even the much simpler standard Petz recovery map~\cite{hu2024petz,vardhan2024petz}.

Motivated by these shortcomings, we use a simple numerical procedure to determine a locally optimal recovery map. We denote by $\hat{J}_{i}$ the empirical estimates of the reduced Choi state on sites $\{i-2w, \ldots, i\}$ obtained from process classical tomography. We further let $K_i$ be the reconstructed Choi state on sites $\{1, \ldots, i\}$. Defining the subsystems $L_i = \{1, \ldots i-2w\}$, $A_i = \{i-2w+1, \ldots i-w\}$, $B_i = \{i-w+1, \ldots i\}$, and $C_i = \{i+1\}$, as well as the union $\bar{A}_i = L_i \cup A_i$, the locally optimal recovery map satisfies
\begin{equation}
    \label{eq:optimization}
    \mathcal{R}_i^\star = \argmin_{\mathcal{R}_i \in \mathcal{C}(B_i \to B_i C_i)} \| \hat{J}_{i+1} - (\mathcal{I}_{A_i} \otimes \mathcal{R}_i) [\tr_{{L}_i}(K_i)] \|_1,
\end{equation}
where $\mathcal{C}(B_i \to B_i C_i)$ is the set of CPTP maps from subsystem $B_i$ to $B_i \cup C_i$. Having obtained $\mathcal{R}_i^\star$ we iterate the procedure for $K_{i+1} = (\mathcal{I}_{\bar{A}_i} \otimes \mathcal{R}_i^\star) [K_i]$, leading to Algorithm~\ref{alg:three-steps}. The output of the algorithm can be represented as the sequence of channels $\mathcal{R}_{n-1}^\star \circ \cdots \circ \mathcal{R}_{2w+1}^\star$ applied to the estimate $\hat{J}_{2w+1}$ of the initial reduced Choi state. For efficient computation, we convert this representation into an MPO representation of the Choi. Since each $\mathcal{R}_i^\star$ acts on $w$ sites, the bond dimension of the MPO depends exponentially on $w$.

\begin{algorithm}[t]
    \DontPrintSemicolon
    \KwIn{Records $(U, V, \mathbf{b})$; system size $n$; reconstruction width $w$.}
    \KwOut{Estimate $\hat{J}$ of the global Choi state $J(\Lambda)$.}
    \BlankLine
    Using process classical shadows, obtain an empirical estimate $\hat{J}_{2w+1}$ for the reduced Choi state on sites $\{1, \ldots, 2w+1\}$. \;
    Assign $K_{2 w + 1} \gets \hat{J}_{2 w + 1}$. \;
    \For{$i=2 w + 1$ \KwTo $n-1$}{
        Using process classical shadows on sites, obtain the empirical estimate $\hat{J}_{i+1}$ for the reduced Choi state on sites $\{i-2w+1, \ldots, i+1\}$.\;
        Obtain $\mathcal{R}_i^\star$ according to Eq.~\eqref{eq:optimization}.\;
        Assign $K_{i+1} \gets (\mathcal{I}_{\bar{A}_i} \otimes \mathcal{R}_i^\star)[K_i]$. \;
    }
    \Return $\hat{J} \gets K_n$.\;
    \caption{Sequential construction of the global process matrix from randomized measurements.}
    \label{alg:three-steps}
\end{algorithm}

The Choi--Jamio\l{}kowski isomorphism implies that the set $\mathcal{C}(B_i \to B_i C_i)$ is convex as the corresponding set of Choi states is convex. Since the cost function in Eq.~\eqref{eq:optimization} is also convex, finding $\mathcal{R}_i^\star$ constitutes a convex optimization problem, which can be solved efficiently numerically. We show in the next section that Algorithm~\ref{alg:three-steps} yields an estimate of the global Choi state with a sample complexity that depends polynomially on the desired 1-norm error and the system size. In practice, one is limited to relatively small values of $w$ since each optimization involves $2 w + 1$ sites in the doubled Hilbert space. Nevertheless, we show in Section~\ref{sec:numerics} that the algorithm can perform well even in regimes where the rigorous bounds are not applicable.

\label{sec:error-main}

We are now able to state our main technical results: A trace norm bound for the Choi state reconstructed in Algorithm~\ref{alg:three-steps}, and the sample complexity required to achieve a desired distance. We only sketch proofs and defer technical details and intermediate lemmas to Appendices~\ref{Appendix:Proof_theorem2} and \ref{Appendix:sample_complexity}.
Our first theorem establishes that under the assumption of exponentially decaying CMI, Algorithm~\ref{alg:three-steps} can achieve an arbitrarily small error in the trace norm of the Choi state if the reconstruction width is sufficiently large and the reduced Choi states are estimated with sufficiently small error.
\begin{theorem}[Trace norm bound for the Choi state]
\label{thm:end-to-end-main}
    Let $\Lambda$ be an $n$-qubit channel with exponentially decaying CMI with parameters $a$ and $\xi$ according to Definition~\ref{def:exp_cmi_main}. Denote by $J(\Lambda)$ the corresponding Choi state and let $\hat{J}$ be the output of Algorithm~\ref{alg:three-steps}. Assume that the estimates $\hat{J}_i$ of the reduced Choi state on subsystems $X_i = \{i - 2w, \ldots, i\}$ computed in Algorithm~\ref{alg:three-steps} satisfy
    \begin{equation}
        \label{eq:local-choi-accuracy-main}
        \frac12 \bigl\| \hat{J}_i - \mathrm{tr}_{[n] \setminus X_i}[J(\Lambda)] \bigr\|_1 \le \frac{a}{2}e^{-w/(2\xi)}
    \end{equation}
    for all $i \in \{2w+1, \ldots, n\}$. There exist constants $b, n_0 > 0$ such that if $n \geq n_0$ and
    \begin{equation}
        \label{eq:w-eta-choice-main}
        w \ge 4\xi \ln \Bigl[\frac{bn}{\varepsilon} \ln \Bigl(\frac{bn}{\varepsilon}\Bigr)\Bigr],
    \end{equation}
    then
    \begin{equation}
        \label{eq:e2e-choi-eps-main}
        \| \hat{J} - J(\Lambda) \|_1 \le \varepsilon 
    \end{equation}
    for any target accuracy $\varepsilon\in(0,1)$.
\end{theorem}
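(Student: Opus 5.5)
We propose a telescoping (hybrid) argument over the sequential steps of Algorithm~\ref{alg:three-steps}, combined with an existence argument for good recovery maps that uses the approximate Markov property. Write $J = J(\Lambda)$ and let $J^{(i)} = \tr_{[n]\setminus\{1,\dots,i\}}[J]$ be the true reduced Choi state on sites $\{1,\ldots,i\}$. We track the error $\delta_i = \|K_i - J^{(i)}\|_1$. Initially, $\delta_{2w+1} \le a e^{-w/(2\xi)}$ by Eq.~\eqref{eq:local-choi-accuracy-main}. For the inductive step, the triangle inequality gives
\begin{equation}
\delta_{i+1} \le \|(\mathcal{I}_{\bar A_i}\otimes\mathcal{R}_i^\star)[K_i - J^{(i)}]\|_1 + \|(\mathcal{I}_{\bar A_i}\otimes\mathcal{R}_i^\star)[J^{(i)}] - J^{(i+1)}\|_1 .
\end{equation}
Since $\mathcal{R}_i^\star$ is CPTP, the trace norm is contractive, so the first term is at most $\delta_i$. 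Hence $\delta_n \le \delta_{2w+1} + \sum_i \eta_i$, where $\eta_i$ is the \emph{global} one-step error of applying the locally optimal map to the true state.

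The main obstacle is to bound $\eta_i$. The map $\mathcal{R}_i^\star$ was chosen to be optimal only on the window $A_iB_iC_i$, and for a noisy target, whereas $\eta_i$ involves the full system $L_iA_iB_iC_i$. We plan two sub-steps.

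\emph{(a) Local optimality.} By the Fawzi--Renner/Junge et al.\ result, there exists a CPTP map $\mathcal{R}_i^{\rm rot}$ on $B_i\to B_iC_i$ with $\|(\mathcal{I}\otimes\mathcal{R}_i^{\rm rot})[J^{(i)}]-J^{(i+1)}\|_1 \le c\sqrt{I_J(\bar A_i:C_i|B_i)} \le c\sqrt{a}\,e^{-w/(2\xi)}$. The same bound holds on the window $A_iB_iC_i$. Comparing the cost of $\mathcal{R}_i^\star$ with that of $\mathcal{R}_i^{\rm rot}$ requires relating the optimization input $\tr_{L_i}K_i$ to the true marginal and $\hat J_{i+1}$ to the true marginal. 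A naive comparison would reinject the accumulated error $\delta_i$, so we must avoid that. To do so, we note that $\tr_{L_i}K_i$ is itself built only from local estimates on $A_iB_i$ up to prior recoveries, and we will try to bound $\|\tr_{L_i}K_i - \tr_{L_i}J^{(i)}\|_1$ locally by $O(e^{-w/(2\xi)})$ rather than by $\delta_i$. If that fails, we will accept a recursion and control its growth. The outcome is that $\mathcal{R}_i^\star$ reproduces the true marginal on $A_iB_iC_i$ up to $O(e^{-w/(2\xi)})$.

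\emph{(b) Local-to-global lifting.} This is the hardest part. Any CPTP map on $B_i$ that approximately recovers $C_i$ from $A_iB_i$ should also approximately recover it from $\bar A_i B_i$, because $L_i$ is separated from $C_i$ by the buffer $A_i$ of width $w$. We plan to use the CMI $I(L_i:A_iC_i|\cdots)$, or rather the chain rule with $I(L_i: C_i \mid A_iB_i)\le a e^{-2w/\xi}$, together with recoverability applied twice. First, we recover $L_i$ from $A_i$ both in the true state and in the output state. Second, we use the data-processing inequality for the map $\mathcal{R}_i^\star$, which commutes with the $A_i\to L_iA_i$ recovery map because the two act on disjoint sites. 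This should give $\eta_i \le O\bigl(\sqrt{a}\,e^{-w/(4\xi)}\bigr)$. The square root from recoverability explains the $4\xi$ in Eq.~\eqref{eq:w-eta-choice-main}.

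Summing over the $n$ steps gives $\|\hat J - J\|_1 \le C n\, e^{-w/(4\xi)}$, possibly with a mild polynomial-in-$w$ prefactor arising from the dimension-dependent or recursion terms. Requiring this to be at most $\varepsilon$ and solving $C n\, w\, e^{-w/(4\xi)}\le\varepsilon$ produces the choice $w \ge 4\xi\ln[(bn/\varepsilon)\ln(bn/\varepsilon)]$, where the inner logarithm absorbs the factor of $w$. The condition $n\ge n_0$ ensures that the logarithm is large enough for this inversion to be valid.
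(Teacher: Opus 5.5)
Your telescoping argument, $\delta_{i+1}\le\delta_i+\eta_i$, combined with the lifting step (b), is a genuinely different route from the paper's, and it can be made rigorous. It does not work with the bounds you wrote, though.

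\emph{Step (a).} You give no mechanism for a window error that does not accumulate. Comparing $\mathcal{R}_i^\star$ with the rotated Petz map yields $\epsilon_i:=\|(\mathcal{I}_{A_i}\otimes\mathcal{R}_i^\star)[\rho(A_iB_i)]-\rho(A_iB_iC_i)\|_1\le 2\|K_i(A_iB_i)-\rho(A_iB_i)\|_1+4\eta+2\sqrt{I_\rho(A_i:C_i\mid B_i)}$. The first term is controlled only by your fallback recursion, which is exactly Proposition~\ref{prop:reconstruction} and grows linearly in $n$. So $\epsilon_i=O\bigl(n\,e^{-w/(2\xi)}\bigr)$, not $O\bigl(e^{-w/(2\xi)}\bigr)$.

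\emph{Step (b).} The quantity you need is $I_\rho(L_i:B_iC_i\mid A_i)\le a\,e^{-w/\xi}$, because you are recovering $L_i$ from $A_i$; $I_\rho(L_i:C_i\mid A_iB_i)$ does not do this job. Take the rotated Petz map $\mathcal{P}_i:A_i\to L_iA_i$, which commutes with $\mathcal{I}\otimes\mathcal{R}_i^\star$. Three triangle-inequality terms then give $\eta_i\le\epsilon_i+4\sqrt{a}\,e^{-w/(2\xi)}$. A single square root produces $e^{-w/(2\xi)}$, not $e^{-w/(4\xi)}$, so your explanation of the $4\xi$ is wrong.

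\emph{The sum.} Adding up the steps gives $\|\hat J-J(\Lambda)\|_1=O\bigl((a+\sqrt a)\,n^2e^{-w/(2\xi)}\bigr)$, not $O\bigl(n\,e^{-w/(4\xi)}\bigr)$. This still proves the theorem. Under the stated condition on $w$ you have $n^2e^{-w/(2\xi)}\le\varepsilon^2/[b^2\ln^2(bn/\varepsilon)]\le\varepsilon^2/b^2$ once $bn\ge e$, so choosing $b$ large enough finishes the proof.

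\emph{Comparison with the paper.} The paper never tracks the global error $\|K_i-J^{(i)}\|_1$. After the same window recursion, it uses data processing to show that the CMI sum $R_{\sigma_n}$ of the output is close to $R_\rho$ (Lemma~\ref{lem:r_continuity}). It then converts local closeness plus small CMI into global closeness via the entropy concavity gap, Markov-entropy continuity and Pinsker's inequality (Lemma~\ref{lem:distance}). The $4\xi$ in the theorem comes from that Pinsker square root: $\|\rho-\sigma_n\|_1^2\lesssim(a w^2n^2/\xi)\,e^{-w/(2\xi)}$.

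The paper's Lemma~\ref{lem:distance} is reusable, since it holds for any two states regardless of how $\sigma$ was produced. The price is entropy-continuity factors and a lost square root. Your route exploits the sequential CPTP structure directly and needs only contractivity, the triangle inequality and recoverability, with no entropy-continuity factors. It also gives the sharper scaling $n^2e^{-w/(2\xi)}$, so a width $w\gtrsim 2\xi\ln(n^2/\varepsilon)$ would already suffice.
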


The proof follows from Theorem~\ref{thm:distance} in Appendix~\ref{Appendix:Proof_theorem2} applied to the special case of Choi states, where each site contains both ``in'' and ``out'' parts of the doubled Hilbert space. To establish Theorem~\ref{thm:distance}, we show that if the distance between the local marginals of two states is small, then their global trace distance is controlled by two quantities: the local mismatches and the amount of long-range correlations as quantified by the CMI. This step combines the quantum Pinsker inequality~\cite{audenaert2005continuity} relating the trace distance to the so-called entropy concavity gap with continuity bounds for entropic quantities, yielding Lemma~\ref{lem:distance}. We next show using the data processing inequality that the CMI of the state generated by Algorithm~\ref{alg:three-steps} is bounded by the CMI of the target state up to an error that depends on the local mismatch of the reconstructed state~(Lemma~\ref{lem:r_continuity}). Finally, we control the local mismatch by comparing the algorithm’s recovery map to a (rotated) Petz map, whose reconstruction error is governed by the approximate Markov property of the true state. This yields a recursion showing that the accumulated local mismatch is driven by the local tomography error combined with the exponentially small CMI (Proposition~\ref{prop:reconstruction}).

Using Theorem~\ref{thm:end-to-end-main}, we may bound the sample complexity for estimating the Choi state with a small error in trace distance. Concretely, we employ a process shadow tomography protocol of Ref.~\cite{levy2024classical} and bound the number of samples required to satisfy Eq.~\eqref{eq:local-choi-accuracy-main}. The condition can be satisfied for all subsystems $A_i$ simultaneously with high probability, resulting in the following corollary.
\begin{corollary}[End-to-end sample complexity]
    \label{thm:sample-complexity}
    Let $\Lambda$ be an $n$-qubit channel with exponentially decaying CMI with parameters $a$ and $\xi$ according to Definition~\ref{def:exp_cmi_main}. Denote by $J(\Lambda)$ the corresponding Choi state and let $\hat{J}$ be the output of Algorithm~\ref{alg:three-steps} with $M$ records $(U, V, \mathbf{b})$ as an input. Given $\varepsilon, \delta \in (0, 1)$, there exists a choice of $w$ such that $\hat{J}$ satisfies $\|\hat{J} - J(\Lambda) \|_1 \le \varepsilon$ with probability at least $1-\delta$ provided
    \begin{equation}
        M \geq c_1  \left( \frac{n}{\varepsilon} \right)^{c_2 \xi + c_3} f(n, \varepsilon, \delta, \xi) ,
        \label{eq:m}
    \end{equation}
    where $c_1$, $c_2$, and $c_3$ are constants and $f(n, \varepsilon, \delta, \xi)$ is a function that depends linearly on $\xi$ and polylogarithmically on $n$, $\varepsilon$, and $\delta$.
\end{corollary}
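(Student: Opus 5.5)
The plan is to combine Theorem~\ref{thm:end-to-end-main} with a concentration bound for process classical shadows on windows of $2w+1$ sites, a union bound over the $n-2w$ windows, and the choice of $w$ saturating Eq.~\eqref{eq:w-eta-choice-main}.

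\textbf{Step 1: fix $w$ and the target local accuracy.} We set
\[
w = \Bigl\lceil 4\xi \ln\Bigl[\tfrac{bn}{\varepsilon}\ln\Bigl(\tfrac{bn}{\varepsilon}\Bigr)\Bigr]\Bigr\rceil .
\]
By Theorem~\ref{thm:end-to-end-main}, it then suffices that every $\hat J_i$ satisfy Eq.~\eqref{eq:local-choi-accuracy-main} with accuracy $\eta := \frac a2 e^{-w/(2\xi)}$. With this choice,
\[
e^{-w/(2\xi)} \gtrsim \Bigl[\tfrac{bn}{\varepsilon}\ln\tfrac{bn}{\varepsilon}\Bigr]^{-2},
\]
so $1/\eta = O\bigl((n/\varepsilon)^2 \,\mathrm{polylog}(n/\varepsilon)\bigr)$. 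For $n<n_0$ we enlarge the constants, or take $w$ large enough to make the window the whole chain.

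\textbf{Step 2: local shadow tomography in trace norm.} Each window has $k=2w+1$ sites, so its Choi state has dimension $D=4^{k}$. From the single-shot estimators of Appendix~\ref{app:access-reduced-channel}, the shadow estimator of the reduced Choi state is a sum of products of local unbiased estimators. Its Pauli coefficients $\tr(P_{\boldsymbol\alpha}\hat J)$ have single-shot magnitude bounded by $3^{|\mathrm{supp}\,\boldsymbol\alpha|}$ up to constants, with supports in the doubled space.

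We convert coefficient accuracy to trace-norm accuracy via
\[
\|X\|_1 \le \sqrt D\,\|X\|_2 ,\qquad \|X\|_2^2 = D^{-1}\sum_{\boldsymbol\alpha} |\tr(P_{\boldsymbol\alpha}X)|^2 .
\]
It therefore suffices to estimate all $D^2=16^{k}$ Pauli coefficients to additive error $\eta/D$. We also project $\hat J_i$ onto the positive, trace-one (or Choi) set, which changes the trace distance by at most a factor of $2$.

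Hoeffding or median-of-means with variance $\le 9^{k}$ per coefficient gives, for fixed $i$, failure probability $\delta'$ after
\[
M \gtrsim \frac{9^{k}D^2}{\eta^2}\ln\frac{16^{k}}{\delta'}
\]
samples. All windows are estimated from the same $M$ records, since each record yields data on every window simultaneously. We therefore take $\delta'=\delta/n$ and apply a union bound.

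\textbf{Step 3: collect exponents.} The dominant factors are $9^{k}D^2\cdot\eta^{-2}$. First,
\[
c^{k} = e^{O(w)} = \Bigl(\tfrac{bn}{\varepsilon}\ln\tfrac{bn}{\varepsilon}\Bigr)^{O(\xi)} ,
\]
giving the $(n/\varepsilon)^{c_2\xi}$ factor together with polylogarithmic corrections. Second, $\eta^{-2}$ contributes $(n/\varepsilon)^{c_3}$. The log factor $\ln(16^{k}n/\delta) = O(w)+\ln(n/\delta)$ is linear in $\xi$ and polylogarithmic in $n$, $\varepsilon$, $\delta$. Together with the $\ln(\cdot)^{O(\xi)}$ corrections, this is absorbed into $f$. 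The latter absorption needs care: $(\ln x)^{O(\xi)}$ is not polylogarithmic with a fixed degree, so I would instead bound it by $x^{O(\xi)}$ by enlarging $c_2$.

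\textbf{Main obstacle.} The main risk is Step 2, namely obtaining a clean trace-norm guarantee for the reduced Choi estimate with the correct dimension dependence. Crude Pauli-to-trace conversion gives exponents like $c^{w}$ with a large $c$. That still fits the form $(n/\varepsilon)^{c_2\xi}$, but I would first try a matrix Bernstein bound on the full estimator $\hat J_i$, using its operator norm $\le 3^{k}$-type bounds, to get operator-norm error. Then $\|\cdot\|_1\le D\|\cdot\|_\infty$, which yields a tidier constant $c_2$.
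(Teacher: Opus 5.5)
Your proposal is correct and follows essentially the same route as the paper. You combine Theorem~\ref{thm:end-to-end-main} with a local shadow-tomography guarantee on windows of $k=2w+1$ sites at accuracy $\eta=(a/2)e^{-w/(2\xi)}$, take a union bound over windows, saturate the condition on $w$, and absorb the $\ln(bn/\varepsilon)$ factors through the crude bound $\ln x\le x$.

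The only real difference is the local step: the paper cites Theorem~II.1 of Ref.~\cite{levy2024classical} as a black box, whereas you re-derive it from Pauli-coefficient concentration and $\|X\|_1\le D\,\epsilon_c$. With per-coefficient variance $9^k$ and $D^2=16^k$ coefficients, your derivation lands on the same $144^k/\eta^2$ scaling as the cited result. Your treatment of $n<n_0$ and of the integer ceiling on $w$ are minor refinements that the paper glosses over.
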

The proof is given in Appendix~\ref{app:proof_corollary}, which includes detailed expressions for the constants $c_{1,2,3}$ and the function $f$.

We note that our results are stated in terms of the trace norm error of the Choi state.
By contrast, the diamond norm $\|\Lambda-\Lambda'\|_\diamond$ has an operational meaning as the worst-case distinguishability of two channels.
In general, the two are related by an inequality of the form $\|\Lambda-\Lambda'\|_\diamond \le d^n \|J(\Lambda)-J(\Lambda')\|_1$, whose dimension-dependent prefactor can be loose.
For structured noise models this conversion can be significantly tighter.
In particular, for Pauli channels (i.e., channels whose process matrices are diagonal in the Pauli basis), we can control the diamond norm with trace norm error, which removes the dimension-dependent prefactor. We provide details in Appendix~\ref{trace_dis2diamond}.

\section{Applications}
\label{sec:numerics}
%%%%%%%%%%%%%%%%%%%%%%%%%%%%%%%%%%%%%%%%%%%%%%%%%%%%%%%%%%%%

\subsection{Learning open-system dynamics}
\label{subsec:open-system}
\begin{figure}
  \includegraphics[width=\columnwidth]{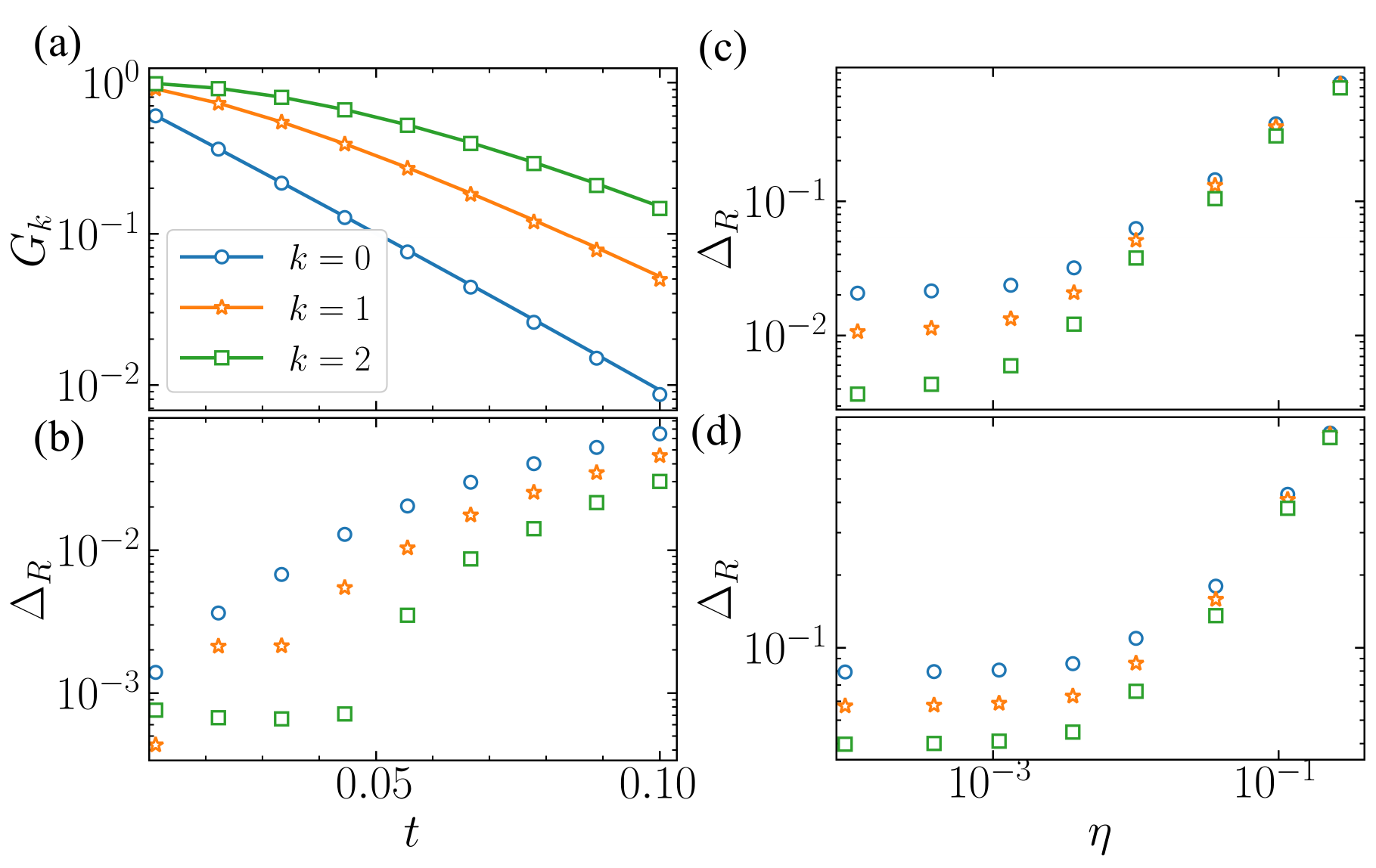}
  \caption{Reconstruction of Pauli-weight-resolved diagonal process matrix elements in a Heisenberg chain with $n=50$ qubits.
  (a)~For Pauli weights $k=0,1,2$, the quantity
  $G_k = \sum_{ \boldsymbol{\alpha} \leq k} \chi_{\boldsymbol{\alpha}\boldsymbol{\alpha}}$ of the exact channel (solid lines) agrees well with the reconstructed channel obtained from our optimization procedure (markers). Panel~(b) The corresponding relative reconstruction error $\Delta_R = |G_k^{\mathrm{E}} - G_k^{\mathrm{R}}|$ is shown, where $G_k^{\mathrm{E}}$ denotes the exact value and $G_k^{\mathrm{R}}$ represents the reconstructed value. Panels~(a) and (b) assume perfect local tomography without noise. (c) and (d) show the relative reconstruction error $\Delta_R$ due to local noise, which is quantified by the error $\eta$ in the reduced Choi states. In (c), the error is evaluated at a fixed evolution time of $t=10^{-2}$, while (d) shows the result at $t=5\times 10^{-2}$.}
  \label{fig:IsingDynamics}
\end{figure}

We now apply our reconstruction scheme to a concrete model of open
quantum system dynamics. We consider an open Heisenberg spin chain with
local dephasing noise. Writing $X_i,Y_i,Z_i$ for the Pauli operators on
the $i$-th system qubit, the reduced system dynamics is governed by the
Lindblad master equation~\cite{breuer2002theory}
\begin{equation}
  \dot\rho_t
  =
  -i[H,\rho_t]
  +
  \sum_{i=1}^{n}
  \Bigl(
    L_i\rho_t L_i^\dagger
    -\tfrac{1}{2}\{L_i^\dagger L_i,\rho_t\}
  \Bigr),
  \label{Master_equation}
\end{equation}
with Hamiltonian
\begin{equation}
  H
  =
  \sum_{i=1}^{n} Z_i
  +
  \frac{1}{2}\sum_{i=1}^{n-1}
    \Bigl(
      X_iX_{i+1}
      + Y_iY_{i+1}
      + Z_iZ_{i+1}
    \Bigr),
  \label{eq:heisenberg-num}
\end{equation}
and local Lindblad operators $L_i = \sqrt{\gamma_z}\,Z_i$ with $\gamma_z = 1$.
These dynamics generate a family of CPTP maps $\Lambda_t$ such that
$\Lambda_t(\rho_0) = \rho_t$, and their associated Choi states
$J(\Lambda_t)$. %Since the Choi state and the process matrix are isomorphic, we first reconstruct the Choi state and then obtain the process matrix. 

We emphasize that although we use Lindblad spin-chain dynamics as a controlled and physically motivated
testbed, our reconstruction protocol does not assume a Lindbladian generator or Markovianity: it recovers the channel $\Lambda_t$ directly from randomized measurements rather than identifying the
underlying Hamiltonian and jump operators as in Lindbladian tomography. This distinction is practically
relevant because in realistic devices the effective noise can be time-dependent, non-Markovian, or only
approximately describable by a local master equation, whereas our method only relies on the locality
structure of the induced Choi state. We also note that a general characterization of how the Choi state
conditional mutual information grows under generic many-body Lindblad evolutions is still lacking. We
therefore treat decaying CMI as a plausible hypothesis, which is justified by the accuracy and stability of the reconstruction in the short-time regime
considered below.

To simulate $J(\Lambda_t)$ we work on a doubled chain, where each super-site consists of a system qubit and a reference qubit. We initialize a product of Bell pairs
$|\omega\rangle = (|00\rangle + |11\rangle)/\sqrt{2}$ between system and
reference on each super-site. All Hamiltonian and Lindblad operators act as
$\mathbb{I} \otimes A$ on system and reference, so that the resulting
state on the doubled chain at time $t$ is precisely the Choi state
$J(\Lambda_t)$ of the open-system channel on the system qubits. We present the $J(\Lambda_t)$ as an MPO and simulate the open-system dynamics using a TEBD-style
tensor-network time-evolution method~\cite{vidal2004efficient,zwolak2004mixed}. All tensor-network
contractions and MPO manipulations are implemented using \textsc{Quimb}~\cite{gray2018quimb}.

From the global Choi state MPO we extract reduced marginals on contiguous windows along the doubled chain. We interpret the synthetic reduced Choi state as $\hat J_{i}$ on sites
$\{i-2w,\ldots,i\}$. In the numerics below we take $w=1$. 
In the noiseless setting, $\hat J_{i}$ is taken to be the exact reduced marginal extracted from
the MPO. In the noisy setting, we consider a phenomenological tomography error model where we perturb the corresponding local Pauli expansion coefficients of Choi state by additive Gaussian noise. More specifically, for each coefficient $c_{\boldsymbol{\alpha}}$ corresponding to the Pauli operator $P_{\boldsymbol{\alpha}}$, we sample independent noise $\xi_{\boldsymbol{\alpha}} \sim
\mathcal{N}(0,\sigma^2)$ and set $\tilde{c}_{\boldsymbol{\alpha}}=c_{\boldsymbol{\alpha}}+\xi_{\boldsymbol{\alpha}}$. Then, we convert these noisy coefficients into an estimate of the local Choi state $\hat J_{i}$, projecting onto physical states by enforcing hermiticity, setting negative eigenvalues to zero, and renormalizing to unit trace. We quantify the local estimation error by $\eta_i = \frac{1}{2}\|\hat J_{i}-J_i\|_1$ and define $\eta = \text{max}_i \, \eta_i$ as the maximum window error, which we use as a summary for the local noise level. 

Given the collection of window estimates $\{\hat J_{i}\}$, we perform the sequential left-to-right
reconstruction of Algorithm~\ref{alg:three-steps}. Denoting by $K_i$ the current reconstructed Choi
state on sites $\{1,\ldots,i\}$, we compute at each step a locally optimized recovery map
$\mathcal R_i^\star\in\mathcal C(B_i\to B_iC_i)$ by fitting the predicted $3$-site window marginal
$(\mathcal I_{A_i}\otimes \mathcal R_i)[\mathrm{Tr}_{\bar A_i}(K_i)]$ to the target window marginal
$\hat J_{i+1}$, as in Eq.~\eqref{eq:optimization}. In our numerical implementation, we optimize over
the Choi matrix of $\mathcal R_i$. Instead of considering the trace norm as objective in Eq.~\eqref{eq:optimization}, we use a Frobenius-norm objective as a practical surrogate, since it yields a  smooth, inexpensive objective amenable to efficient solvers, leading to faster and stable optimization compared with trace distance in practice~\cite{boyd2004convex}. Since $\|X\|_1 \leq \sqrt{d_{\text{win}}}\|X\|_F$ on each $(2w+1)$-site window, the resulting trace norm residual is within a factor $2^{2w+1}$ of the trace distance optimum. The resulting map
$\mathcal R_i^\star$ is then used to update the global estimate via
$K_{i+1}=(\mathcal I_{\bar A_i}\otimes \mathcal I_{A_i}\otimes \mathcal R_i^\star)[K_i]$.
The convex optimization is carried out using \textsc{CVXPY}~\cite{diamond2016cvxpy}. 

We use the diagonal entries $\chi_{\boldsymbol{\alpha} \boldsymbol{\alpha}}$ of the process matrix as a convenient
diagnostic of how weight spreads under the channel. Fig.~\ref{fig:IsingDynamics} summarises the reconstruction performance
for this open-system dynamics. Panels (a) and (b) show that in the absence of shot noise, the reconstructed channel matches the exact channel in the Pauli-weight-resolved sums of diagonal Pauli-basis process matrix entries, with small and smooth relative error $\Delta_R$ over the whole time range. Panels (c) and (d) probe how the reconstruction error depends on the error of local window estimations. We introduce local estimation errors by perturbing the Pauli expansion coefficients of each local marginal as described above, and summarize the resulting local noise level by $\delta_{\max}$. We then run the sequential reconstruction. Fig.~\ref{fig:IsingDynamics}~(c) shows the resulting global error at fixed evolution time
$t=10^{-2}$, while (d) shows the same dependence at $t=5\times10^{-2}$. In both cases, the
global reconstruction error decreases as $\delta_{\max}$ is reduced, indicating that the
local-to-global stitching procedure is stable to moderate local estimation errors for the $n=50$
chain studied here.

\subsection{Learning noisy circuits}
\begin{figure}[http]
  \includegraphics[width=0.48\textwidth]{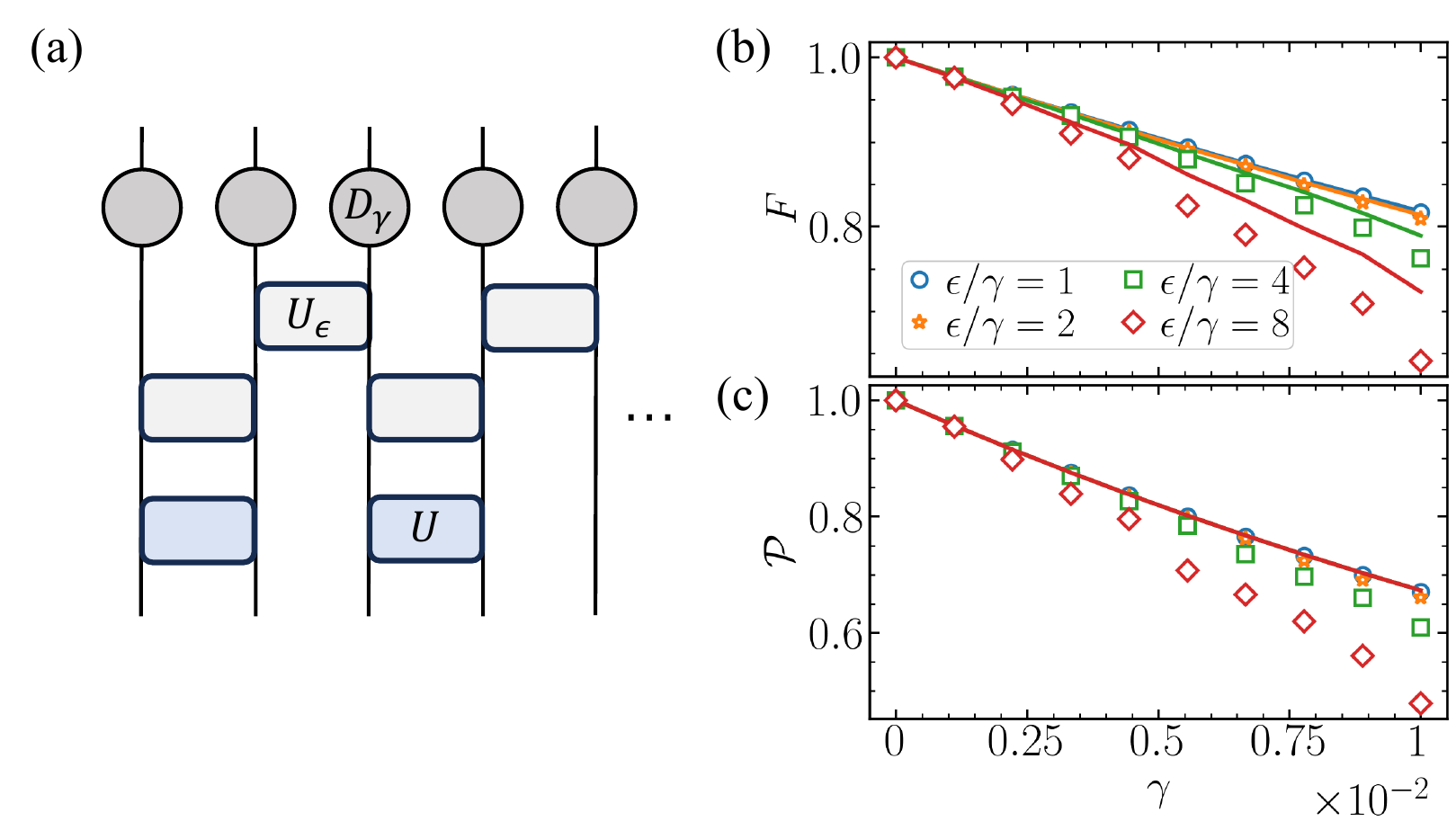}
  \caption{Global reconstruction of a noisy shallow circuit with $n=20$ qubits.
  (a) Structure of the channel
  $\Lambda_{U,\epsilon,\gamma}
   = \bigl(\bigotimes_j \mathcal{D}_\gamma^{(j)}\bigr) \circ U_\epsilon \circ U$,
  where $U$ is a single layer of nearest-neighbour two-qubit gates,
  $U_\epsilon$ denotes a coherent perturbation,
  and $\mathcal{D}_\gamma$ is a single-qubit dephasing channel.
  (b) Process fidelity with respect to the ideal unitary circuit $U$, showing $F(\Lambda_{U,\epsilon,\gamma},\Lambda_U)$ (solid line)  for the exact noisy channel and $F(\hat{\Lambda}_{U,\epsilon,\gamma},\Lambda_U)$ (markers) for the optimization-based reconstruction, as a function of $\gamma$. 
  (c) Purity of the Choi state $\mathcal{P}\bigl(J(\Lambda_{U,\epsilon,\gamma})\bigr)$ (solid lines) together with the purity of the reconstructed channel $\mathcal{P}\bigl(J(\hat{\Lambda}_{U,\epsilon,\gamma})\bigr)$(markers), as a function of $\gamma$.}
  \label{fig:circuit-layer}
\end{figure}

To illustrate the applicability of our method to shallow noisy circuits, we consider channels of the form (see Fig.~\ref{fig:circuit-layer}(a))
\[
  \Lambda_{U,\epsilon,\gamma}(\rho)
  =
  \Bigl(
      \bigotimes_{j=1}^n \mathcal{D}_\gamma^{(j)}
  \Bigr)
  \circ U_{\epsilon}\circ U (\rho),
\]
where $U$ is a single layer of nearest-neighbour two-qubit gates acting on even bonds,
$U = \bigotimes_{i \,\mathrm{even}} U_{i,i+1}$, and $U_{\epsilon}$ denotes a coherent-noise layer acting on even and odd bonds,
$U_\epsilon = \bigotimes_{i \,\mathrm{odd}} U_{i,i+1}(\epsilon)\circ\bigotimes_{i \,\mathrm{even}} U_{i,i+1}(\epsilon)$, which models an undesired coherent coupling.
Here $\mathcal{D}_\gamma^{(j)}$ is a single-qubit dephasing channel in the $Z$ basis with strength~$\gamma$ on site $j$, e.g.\
$\mathcal{D}_\gamma^{(j)}(\rho) = (1-\gamma)\rho + \gamma Z_j \rho Z_j$. For $\gamma=\epsilon=0$, the channel reduces to a depth-one circuit of disjoint two-qubit unitaries on even bonds. Viewing the Choi state $J(\Lambda)$ as an $n$-site chain by grouping the input--output degrees of freedom of each qubit into a single site, $J(\Lambda)$ then factorizes into independent two-site blocks. Consequently, for any contiguous tripartition $A$--$B$--$C$ with $|B|\ge 1$, we have
$I_{J(\Lambda)}(A:C|B)=0$. In the weak-noise, shallow-depth regime, we therefore expect the Choi states generated by $\Lambda_{U,\epsilon,\gamma}$ to exhibit an analogous approximate Markov property. We emphasize, however, that this expectation is restricted to sufficiently shallow depths. In noisy random circuits the CMI can spread superlinearly and become long-ranged at larger depths~\cite{lee2024universal}.

When $\gamma>0$ or $\epsilon>0$, the combination of coherent noise
and local dephasing induces spatially correlated noise in the Choi
state and increases the CMI. For each choice of $(U,\epsilon,\gamma)$ we compute the corresponding
Choi state $J(\Lambda_{U,\epsilon,\gamma})$ and generate synthetic local
data according to the access model of Sec.~\ref{sec:setting}.
Here we consider $U = \bigotimes_{i=\mathrm{even}}e^{i \frac{\pi}{4} (X_iX_{i+1}+Y_iY_{i+1})}$, which corresponds
to an iSWAP gate.
We take $U_\epsilon = \bigotimes_{i=\mathrm{odd}}e^{i \epsilon (\alpha_1 X_iX_{i+1}+\alpha_2 Y_iY_{i+1}+ \alpha_3 Z_{i}Z_{i+1})}$,
where we set $\epsilon = r\gamma$ with $r\in\{1,2,4,8\}$. For each data point, the coefficients 
$\alpha_1,\alpha_2,\alpha_3$ of each $U_\epsilon$ are drawn independently from a uniform distribution on $[0,1)$. We vary the noise amplitude $\gamma$ from $0$ to $10^{-2}$.

Fig~\ref{fig:circuit-layer} shows reconstruction performance for $n=20$ qubits. To benchmark the reconstruction, we compute the fidelity between the Choi state of the noisy channel and that of the ideal unitary channel. Since the Choi state of the unitary channel is pure, the fidelity reduces to a simple overlap~\cite{gilchrist2005distance}  $F(\Lambda_{U,\epsilon,\gamma}, \Lambda_U) = \mathrm{Tr} \left[J(\Lambda_{U,\epsilon,\gamma}) J(U)\right]$. The fidelity bounds the trace distance according to the Fuchs--van de Graaf inequalities $1-\sqrt{F(\Lambda_{U,\epsilon,\gamma}, U)} \leq \tfrac{1}{2} \| J(\Lambda_{U,\epsilon,\gamma}) - J(U) \| \leq \sqrt{1-F(\Lambda_{U,\epsilon,\gamma}, U)}$.
Fig~\ref{fig:circuit-layer}(b) reports this fidelity as a function of $\gamma$. The reconstructed channel closely
tracks the exact channel over the whole range, with small errors.

We also report the Choi state purity $\mathcal{P}\bigl(J(\Lambda_{U,\epsilon,\gamma})\bigr) = \mathrm{Tr}\!\left[J(\Lambda_{U,\epsilon,\gamma})^2\right]$
as a simple global diagnostic of unitarity of the quantum process ~\cite{montanaro2013survey}: the amount of coherent versus incoherent noise in the channel. Fig~\ref{fig:circuit-layer}~(c) shows the  purity computed
from the reconstructed Choi state faithfully reproduces the exact values across the explored noise
range. For fixed $\gamma$, varying $\epsilon$ only modifies the coherent unitary part of the channel, which induces a unitary change of basis of the Choi state. Since the purity is invariant under unitary conjugation, $\mathcal{P}(J(\Lambda_{U,\epsilon,\gamma}))$ is independent of $\epsilon$ and depends only on $\gamma$. Together, these results demonstrate that our local-to-global learning scheme can reliably
reconstruct physically relevant global functionals of noisy shallow circuit layers from local data.

%%%%%%%%%%%%%%%%%%%%%%%%%%%%%%%%%%%%%%%%%%%%%%%%%%%%%%%%%%%%
\section{Discussion and outlook}
\label{sec:discussion}
%%%%%%%%%%%%%%%%%%%%%%%%%%%%%%%%%%%%%%%%%%%%%%%%%%%%%%%%%%%%
We introduced a local-to-global reconstruction framework for quantum channels in one dimension under the physically motivated assumption of exponentially decaying CMI of the Choi state. By combining scalable process classical shadows on small windows with a sequential recovery procedure, we obtain an explicit global estimate $\hat J$ and the corresponding process matrix $\hat\chi$. This representation supports efficient evaluation of both local and genuinely nonlocal quantities, including purity- and entropy-like functionals as well as selected process matrix elements. Numerically, we observed that the method can remain effective even when the rigorous worst-case regime is not strictly satisfied, enabling reconstructions for channels acting on up to $50$ qubits from reduced windows of size $3$ in our examples. While we are able to rigorously bound the 1-norm error in the Choi state, it remains an open question whether tight bounds on the diamond norm of the channel can be obtained using a similar approach.

Several directions are particularly promising for improving practical performance and broadening the scope of the approach. A natural direction is to combine our local-to-global stitching framework with heuristic tensor-network learning algorithms~\cite{torlai2023quantum,mangini2024tensor}. These works parameterize the global Choi matrix by a tensor-network ansatz and fit its parameters via data-driven, generally non-convex optimization over an informationally complete measurement dataset. Inspired by this, we envision a hybrid strategy: one could learn a low-complexity tensor-network model restricted to each local window directly from the randomized measurement records, and then feed the learned window marginals into our recovery-map procedure to produce a globally consistent MPO estimate of the full Choi state. Such a hybrid approach could improve sample efficiency compared to worst-case window tomography while avoiding the need to train a single global tensor-network model directly.

Our access model assumes ideal local randomization gates and measurements. In experiments, state-preparation-and-measurement (SPAM) errors can bias the window estimates $\hat J_i$ and thereby propagate through the recovery procedure. An attractive feature of our approach is that SPAM errors are often largely local. Hence, if SPAM errors are well modeled by shallow local noise, it may be absorbed into the same locality structure exploited by the recovery maps. A careful empirical study benchmarking SPAM-mitigation strategies within our pipeline, especially in regimes where SPAM dominates over intrinsic channel noise, would help identify the most robust experimental pathway.

Our guarantees rely on exponential decay of CMI on the Choi state, which is a strong but physically motivated locality condition. We expect it to hold for channels generated by shallow circuits with local interactions and sufficiently weak noise, where information does not propagate far and long-range correlations remain suppressed. Conversely, for deep circuits, strongly correlated noise, or channels that generate long-range entanglement on the doubled system, the CMI tail may decay more slowly. In such regimes larger reconstruction widths $w$ may be necessary, leading to larger bond dimensions and higher sample requirements. For practical deployments, it is therefore useful to develop diagnostics, e.g., estimating empirical CMIs on triples $A$--$B$--$C$ of windowed subsystems to probe whether $I(A:C\mid B)$ decreases as $|B|$ grows. Understanding typical correlation lengths for relevant device noise models remains an important direction.

Extending the framework to higher-dimensional geometries is another important goal for near-term devices. The main challenge is that representing and manipulating 2D tensor-network objects (e.g., PEPO/PEPS) typically requires approximate contraction schemes, and the computational cost can grow quickly with bond dimension and separator width. Developing principled patch-based or separator-based variants of our recovery approach in 2D, together with efficient approximate contraction and stability guarantees under CMI-like locality assumptions, is an exciting and challenging direction for future work.

\section{Acknowledgements}
We thank J.~Ignacio Cirac for insightful comments and feedback on the manuscript. We further thank \'{A}lvaro Alhambra, Flavio Baccari, Marianna Crupi, Giacomo Giudice, Daniel Malz, Daniel Stilck Fran\c{c}a, Lorenzo Piroli, and Yijian Zou for fruitful discussions. We acknowledge support from the German Federal Ministry of Education and Research (BMBF) through the funded project ALMANAQC, grant number 13N17236 within the research program ``Quantum Systems''.

\bibliography{QLT_ref}

@misc{AlhambraConditional2024,
  title = {Conditional Independence of 1D Gibbs States with Applications to Efficient Learning},
  author = {Alhambra, \'Alvaro M. and Capel, \'Angela and Gondolf, Paul and Ruiz-de-Alarc\'on, Alberto and Scalet, Samuel O.},
  year = 2024,
  month = {feb},
  number = {arXiv:2402.18500},
  eprint = {2402.18500},
  primaryclass = {quant-ph},
  publisher = {arXiv},
  doi = {10.48550/arXiv.2402.18500},
  archiveprefix = {arXiv}
}

@article{PetzSufficient1986,
  author = {Petz, Dénes},
  title = {Sufficient subalgebras and the relative entropy of states of a von {Neumann} algebra},
  journal = {Communications in Mathematical Physics},
  volume = {105},
  number = {1},
  pages = {123--131},
  year = {1986},
  month = {mar},
  doi = {10.1007/BF01212345},
  url = {https://doi.org/10.1007/BF01212345},
  issn = {0010-3616, 1432-0916},
  urldate = {2026-01-16}
}

@article{PetzMonotonicity2003,
  author = {Petz, Dénes},
  title = {Monotonicity of quantum relative entropy revisited},
  journal = {Reviews in Mathematical Physics},
  volume = {15},
  number = {01},
  pages = {79--91},
  year = {2003},
  month = {mar},
  doi = {10.1142/S0129055X03001576},
  url = {https://doi.org/10.1142/S0129055X03001576},
  issn = {0129-055X, 1793-6659}
}

@book{SutterApproximate2018,
  author = {Sutter, David},
  title = {Approximate {Quantum} {Markov} {Chains}},
  series = {{SpringerBriefs} in {Mathematical} {Physics}},
  volume = {28},
  year = {2018},
  publisher = {Springer International Publishing},
  doi = {10.1007/978-3-319-78732-9},
  url = {https://doi.org/10.1007/978-3-319-78732-9},
  isbn = {978-3-319-78731-2 978-3-319-78732-9},
  address = {Cham}
}

@article{HarperEfficient2020,
  author = {Harper, Robin and Flammia, Steven T. and Wallman, Joel J.},
  title = {Efficient Learning of Quantum Noise},
  journal = {Nature Physics},
  volume = {16},
  number = {12},
  pages = {1184--1188},
  year = 2020,
  month = {dec},
  doi = {10.1038/s41567-020-0992-8},
  url = {https://doi.org/10.1038/s41567-020-0992-8},
  issn = {1745-2473, 1745-2481}
}

@article{CrupiEfficient2025,
  title = {Efficient Characterization of Coherent and Correlated Noise in Layers of Gates},
  author = {Crupi, Marianna and Cirac, J. Ignacio and Baccari, Flavio},
  year = 2025,
  month = jul,
  journal = {PRX Quantum},
  pages = 040374,
  volume = 6,
  doi = {10.1103/8ng1-4c1k}
}

@article{OlsacherHamiltonian2025,
  author = {Olsacher, Tobias and Kraft, Tristan and Kokail, Christian and Kraus, Barbara and Zoller, Peter},
  title = {Hamiltonian and {{Liouvillian}} Learning in Weakly-Dissipative Quantum Many-Body Systems},
  journal = {Quantum Science and Technology},
  volume = {10},
  number = {1},
  pages = {015065},
  year = 2025,
  month = {jan},
  doi = {10.1088/2058-9565/ad9ed5},
  url = {https://doi.org/10.1088/2058-9565/ad9ed5},
  issn = {2058-9565}
}

@article{StilckFrancaEfficient2024,
  author = {Stilck Fran{\c c}a, Daniel and Markovich, Liubov A. and Dobrovitski, V. V. and Werner, Albert H. and Borregaard, Johannes},
  title = {Efficient and Robust Estimation of Many-Qubit {{Hamiltonians}}},
  journal = {Nature Communications},
  volume = {15},
  number = {1},
  pages = {311},
  year = 2024,
  month = {jan},
  doi = {10.1038/s41467-023-44012-5},
  url = {https://doi.org/10.1038/s41467-023-44012-5},
  issn = {2041-1723}
}

@misc{FanizzaLearning2025,
  author = {Fanizza, Marco and Galke, Niklas and Lumbreras, Josep and Rouz{\'e}, Cambyse and Winter, Andreas},
  title = {Learning Finitely Correlated States: Stability of the Spectral Reconstruction},
  number = {arXiv:2312.07516},
  year = 2025,
  month = {mar},
  publisher = {arXiv},
  doi = {10.48550/arXiv.2312.07516},
  url = {https://doi.org/10.48550/arXiv.2312.07516},
  eprint = {2312.07516},
  archiveprefix = {arXiv},
  primaryclass = {quant-ph},
  shorttitle = {Learning Finitely Correlated States}
}

@article{TorlaiQuantum2023,
  author = {Torlai, Giacomo and Wood, Christopher J. and Acharya, Atithi and Carleo, Giuseppe and Carrasquilla, Juan and Aolita, Leandro},
  title = {Quantum Process Tomography with Unsupervised Learning and Tensor Networks},
  journal = {Nature Communications},
  volume = {14},
  number = {1},
  pages = {2858},
  year = 2023,
  month = {may},
  doi = {10.1038/s41467-023-38332-9},
  url = {https://doi.org/10.1038/s41467-023-38332-9},
  issn = {2041-1723}
}

@misc{ScaletClassical2025,
  author = {Scalet, Samuel O. and Capel, Angela and Chowdhury, Anirban N. and Fawzi, Hamza and Fawzi, Omar and Kim, Isaac H. and Tikku, Arkin},
  title = {Classical {{Estimation}} of the {{Free Energy}} and {{Quantum Gibbs Sampling}} from the {{Markov Entropy Decomposition}}},
  number = {arXiv:2504.17405},
  year = 2025,
  month = {apr},
  publisher = {arXiv},
  doi = {10.48550/arXiv.2504.17405},
  url = {https://doi.org/10.48550/arXiv.2504.17405},
  eprint = {2504.17405},
  archiveprefix = {arXiv},
  primaryclass = {quant-ph}
}

@article{JungeUniversal2018,
  author = {Junge, Marius and Renner, Renato and Sutter, David and Wilde, Mark M. and Winter, Andreas},
  title = {Universal {{Recovery Maps}} and {{Approximate Sufficiency}} of {{Quantum Relative Entropy}}},
  journal = {Annales Henri Poincar\'e},
  volume = {19},
  number = {10},
  pages = {2955--2978},
  year = 2018,
  month = {oct},
  doi = {10.1007/s00023-018-0716-0},
  url = {https://doi.org/10.1007/s00023-018-0716-0},
  issn = {1424-0637, 1424-0661}
}

@article{fawzi2015quantum,
  author = {Fawzi, Omar and Renner, Renato},
  title = {Quantum conditional mutual information and approximate Markov chains},
  journal = {Communications in Mathematical Physics},
  volume = {340},
  number = {2},
  pages = {575--611},
  year = {2015},
  publisher = {Springer},
  doi = {10.1007/s00220-015-2466-x},
  url = {https://doi.org/10.1007/s00220-015-2466-x}
}

@article{lee2024universal,
  author = {Lee, Su-un and Oh, Changhun and Wong, Yat and Chen, Senrui and Jiang, Liang},
  title = {Universal spreading of conditional mutual information in noisy random circuits},
  journal = {Physical Review Letters},
  volume = {133},
  number = {20},
  pages = {200402},
  year = {2024},
  publisher = {APS},
  doi = {10.1103/PhysRevLett.133.200402},
  url = {https://doi.org/10.1103/PhysRevLett.133.200402}
}

@inproceedings{huang2024learning,
  author = {Huang, Hsin-Yuan and Liu, Yunchao and Broughton, Michael and Kim, Isaac and Anshu, Anurag and Landau, Zeph and McClean, Jarrod R},
  title = {Learning shallow quantum circuits},
  booktitle = {Proceedings of the 56th Annual ACM Symposium on Theory of Computing},
  pages = {1343--1351},
  year = {2024},
  doi = {10.1145/3618260.3649722},
  url = {https://doi.org/10.1145/3618260.3649722}
}

@article{chuang1997prescription,
  author = {Chuang, Isaac L and Nielsen, Michael A},
  title = {Prescription for experimental determination of the dynamics of a quantum black box},
  journal = {Journal of Modern Optics},
  volume = {44},
  number = {11-12},
  pages = {2455--2467},
  year = {1997},
  publisher = {Taylor \& Francis},
  doi = {10.1080/09500349708231894},
  url = {https://doi.org/10.1080/09500349708231894}
}

@article{poyatos1997complete,
  author = {Poyatos, JF and Cirac, J Ignacio and Zoller, Peter},
  title = {Complete characterization of a quantum process: the two-bit quantum gate},
  journal = {Physical Review Letters},
  volume = {78},
  number = {2},
  pages = {390},
  year = {1997},
  publisher = {APS},
  doi = {10.1103/PhysRevLett.78.390},
  url = {https://doi.org/10.1103/PhysRevLett.78.390}
}

@book{breuer2002theory,
  author = {Breuer, Heinz-Peter and Petruccione, Francesco},
  title = {The theory of open quantum systems},
  year = {2002},
  publisher = {OUP Oxford},
  doi = {10.1093/acprof:oso/9780199213900.001.0001},
  url = {https://doi.org/10.1093/acprof:oso/9780199213900.001.0001},
  isbn = {9780198520634}
}

@article{vidal2004efficient,
  author = {Vidal, Guifr{\'e}},
  title = {Efficient simulation of one-dimensional quantum many-body systems},
  journal = {Physical review letters},
  volume = {93},
  number = {4},
  pages = {040502},
  year = {2004},
  publisher = {APS},
  doi = {10.1103/PhysRevLett.93.040502},
  url = {https://doi.org/10.1103/PhysRevLett.93.040502}
}

@article{zwolak2004mixed,
  author = {Zwolak, Michael and Vidal, Guifr{\'e}},
  title = {Mixed-State Dynamics in One-Dimensional Quantum Lattice Systems:A Time-Dependent Superoperator Renormalization Algorithm},
  journal = {Physical review letters},
  volume = {93},
  number = {20},
  pages = {207205},
  year = {2004},
  publisher = {APS},
  doi = {10.1103/PhysRevLett.93.207205},
  url = {https://doi.org/10.1103/PhysRevLett.93.207205}
}

@article{montanaro2013survey,
  author = {Montanaro, Ashley and de Wolf, Ronald},
  title = {A survey of quantum property testing},
  journal = {arXiv preprint arXiv:1310.2035},
  year = {2013},
  doi = {10.48550/arXiv.1310.2035},
  url = {https://doi.org/10.48550/arXiv.1310.2035},
  eprint = {1310.2035},
  archiveprefix = {arXiv}
}

@article{gilchrist2005distance,
  author = {Gilchrist, Alexei and Langford, Nathan K and Nielsen, Michael A},
  title = {Distance measures to compare real and ideal quantum processes},
  journal = {Physical Review A—Atomic, Molecular, and Optical Physics},
  volume = {71},
  number = {6},
  pages = {062310},
  year = {2005},
  publisher = {APS},
  doi = {10.1103/PhysRevA.71.062310},
  url = {https://doi.org/10.1103/PhysRevA.71.062310}
}

@article{suzuki1976generalized,
  author = {Suzuki, Masuo},
  title = {Generalized Trotter's formula and systematic approximants of exponential operators and inner derivations with applications to many-body problems},
  journal = {Communications in Mathematical Physics},
  volume = {51},
  number = {2},
  pages = {183--190},
  year = {1976},
  publisher = {Springer},
  doi = {10.1007/BF01609348},
  url = {https://doi.org/10.1007/BF01609348}
}

@article{o2016conic,
  author = {O’donoghue, Brendan and Chu, Eric and Parikh, Neal and Boyd, Stephen},
  title = {Conic optimization via operator splitting and homogeneous self-dual embedding},
  journal = {Journal of Optimization Theory and Applications},
  volume = {169},
  number = {3},
  pages = {1042--1068},
  year = {2016},
  publisher = {Springer},
  doi = {10.1007/s10957-016-0892-3},
  url = {https://doi.org/10.1007/s10957-016-0892-3}
}

@book{boyd2004convex,
  author = {Boyd, Stephen and Vandenberghe, Lieven},
  title = {Convex Optimization},
  year = {2004},
  publisher = {Cambridge University Press},
  doi = {10.1017/CBO9780511804441},
  url = {https://doi.org/10.1017/CBO9780511804441},
  isbn = {9780521833783}
}

@article{beaudry2011intuitive,
  author = {Beaudry, Normand J and Renner, Renato},
  title = {An intuitive proof of the data processing inequality},
  journal = {arXiv preprint arXiv:1107.0740},
  year = {2011},
  doi = {10.48550/arXiv.1107.0740},
  url = {https://doi.org/10.48550/arXiv.1107.0740},
  eprint = {1107.0740},
  archiveprefix = {arXiv}
}

@article{lindblad1975completely,
  author = {Lindblad, G{\"o}ran},
  title = {Completely positive maps and entropy inequalities},
  journal = {Communications in Mathematical Physics},
  volume = {40},
  number = {2},
  pages = {147--151},
  year = {1975},
  publisher = {Springer},
  doi = {10.1007/BF01609396},
  url = {https://doi.org/10.1007/BF01609396}
}

@article{verstraete2004matrix,
  author = {Verstraete, Frank and Garcia-Ripoll, Juan J and Cirac, Juan Ignacio},
  title = {Matrix product density operators: Simulation of finite-temperature and dissipative systems},
  journal = {Physical review letters},
  volume = {93},
  number = {20},
  pages = {207204},
  year = {2004},
  publisher = {APS},
  doi = {10.1103/PhysRevLett.93.207204},
  url = {https://doi.org/10.1103/PhysRevLett.93.207204}
}

@article{guo2024quantum,
  author = {Guo, Yuchen and Yang, Shuo},
  title = {Quantum state tomography with locally purified density operators and local measurements},
  journal = {Communications Physics},
  volume = {7},
  number = {1},
  pages = {322},
  year = {2024},
  publisher = {Nature Publishing Group UK London},
  doi = {10.1038/s42005-024-01813-4},
  url = {https://doi.org/10.1038/s42005-024-01813-4}
}

@article{jamiolkowski1972linear,
  author = {Jamio{\l}kowski, Andrzej},
  title = {Linear transformations which preserve trace and positive semidefiniteness of operators},
  journal = {Reports on mathematical physics},
  volume = {3},
  number = {4},
  pages = {275--278},
  year = {1972},
  publisher = {Elsevier},
  doi = {10.1016/0034-4877(72)90011-0},
  url = {https://doi.org/10.1016/0034-4877(72)90011-0}
}

@article{choi1975completely,
  author = {Choi, Man-Duen},
  title = {Completely positive linear maps on complex matrices},
  journal = {Linear algebra and its applications},
  volume = {10},
  number = {3},
  pages = {285--290},
  year = {1975},
  publisher = {Elsevier},
  doi = {10.1016/0024-3795(75)90075-0},
  url = {https://doi.org/10.1016/0024-3795(75)90075-0}
}

@article{cirac2021matrix,
  author = {Cirac, J Ignacio and Perez-Garcia, David and Schuch, Norbert and Verstraete, Frank},
  title = {Matrix product states and projected entangled pair states: Concepts, symmetries, theorems},
  journal = {Reviews of Modern Physics},
  volume = {93},
  number = {4},
  pages = {045003},
  year = {2021},
  publisher = {APS},
  doi = {10.1103/RevModPhys.93.045003},
  url = {https://doi.org/10.1103/RevModPhys.93.045003}
}

@article{pirvu2010matrix,
  author = {Pirvu, Bogdan and Murg, Valentin and Cirac, J Ignacio and Verstraete, Frank},
  title = {Matrix product operator representations},
  journal = {New Journal of Physics},
  volume = {12},
  number = {2},
  pages = {025012},
  year = {2010},
  publisher = {IOP Publishing},
  doi = {10.1088/1367-2630/12/2/025012},
  url = {https://doi.org/10.1088/1367-2630/12/2/025012}
}

@article{winter2016tight,
  author = {Winter, Andreas},
  title = {Tight uniform continuity bounds for quantum entropies: conditional entropy, relative entropy distance and energy constraints},
  journal = {Communications in Mathematical Physics},
  volume = {347},
  number = {1},
  pages = {291--313},
  year = {2016},
  publisher = {Springer},
  doi = {10.1007/s00220-016-2609-8},
  url = {https://doi.org/10.1007/s00220-016-2609-8}
}

@article{poulin2011markov,
  author = {Poulin, David and Hastings, Matthew B},
  title = {Markov entropy decomposition: a variational dual for quantum belief propagation},
  journal = {Physical review letters},
  volume = {106},
  number = {8},
  pages = {080403},
  year = {2011},
  publisher = {APS},
  doi = {10.1103/PhysRevLett.106.080403},
  url = {https://doi.org/10.1103/PhysRevLett.106.080403}
}

@article{audenaert2005continuity,
  author = {Audenaert, Koenraad MR and Eisert, Jens},
  title = {Continuity bounds on the quantum relative entropy},
  journal = {Journal of Mathematical Physics},
  volume = {46},
  number = {10},
  pages = {102104},
  year = {2005},
  publisher = {AIP Publishing},
  doi = {10.1063/1.2044667},
  url = {https://doi.org/10.1063/1.2044667}
}

@misc{VottoLearning2025,
  author = {Votto, Matteo and Ljubotina, Marko and Lancien, C{\'e}cilia and Cirac, J. Ignacio and Zoller, Peter and Serbyn, Maksym and Piroli, Lorenzo and Vermersch, Beno{\^i}t},
  title = {Learning Mixed Quantum States in Large-Scale Experiments},
  number = {arXiv:2507.12550},
  year = 2025,
  month = {jul},
  publisher = {arXiv},
  doi = {10.48550/arXiv.2507.12550},
  url = {https://doi.org/10.48550/arXiv.2507.12550},
  eprint = {2507.12550},
  archiveprefix = {arXiv preprint arXiv}
}

@article{guo2024site,
  author = {Guo, S-A and Wu, Y-K and Ye, J and Zhang, L and Lian, W-Q and Yao, R and Wang, Y and Yan, R-Y and Yi, Y-J and Xu, Y-L and others},
  title = {A site-resolved two-dimensional quantum simulator with hundreds of trapped ions},
  journal = {Nature},
  volume = {630},
  number = {8017},
  pages = {613--618},
  year = {2024},
  publisher = {Nature Publishing Group UK London},
  doi = {10.1038/s41586-024-07459-0},
  url = {https://doi.org/10.1038/s41586-024-07459-0}
}

@article{kim2023evidence,
  author = {Kim, Youngseok and Eddins, Andrew and Anand, Sajant and Wei, Ken Xuan and Van Den Berg, Ewout and Rosenblatt, Sami and Nayfeh, Hasan and Wu, Yantao and Zaletel, Michael and Temme, Kristan and others},
  title = {Evidence for the utility of quantum computing before fault tolerance},
  journal = {Nature},
  volume = {618},
  number = {7965},
  pages = {500--505},
  year = {2023},
  publisher = {Nature Publishing Group UK London},
  doi = {10.1038/s41586-023-06096-3},
  url = {https://doi.org/10.1038/s41586-023-06096-3}
}

@article{ebadi2021quantum,
  author = {Ebadi, Sepehr and Wang, Tout T and Levine, Harry and Keesling, Alexander and Semeghini, Giulia and Omran, Ahmed and Bluvstein, Dolev and Samajdar, Rhine and Pichler, Hannes and Ho, Wen Wei and others},
  title = {Quantum phases of matter on a 256-atom programmable quantum simulator},
  journal = {Nature},
  volume = {595},
  number = {7866},
  pages = {227--232},
  year = {2021},
  publisher = {Nature Publishing Group UK London},
  doi = {10.1038/s41586-021-03582-4},
  url = {https://doi.org/10.1038/s41586-021-03582-4}
}

@article{torlai2023quantum,
  author = {Torlai, Giacomo and Wood, Christopher J and Acharya, Atithi and Carleo, Giuseppe and Carrasquilla, Juan and Aolita, Leandro},
  title = {Quantum process tomography with unsupervised learning and tensor networks},
  journal = {Nature Communications},
  volume = {14},
  number = {1},
  pages = {2858},
  year = {2023},
  publisher = {Nature Publishing Group UK London},
  doi = {10.1038/s41467-023-38332-9},
  url = {https://doi.org/10.1038/s41467-023-38332-9}
}

@article{mangini2024tensor,
  author = {Mangini, Stefano and Cattaneo, Marco and Cavalcanti, Daniel and Filippov, Sergei and Rossi, Matteo AC and Garc{\'\i}a-P{\'e}rez, Guillermo},
  title = {Tensor network noise characterization for near-term quantum computers},
  journal = {Physical Review Research},
  volume = {6},
  number = {3},
  pages = {033217},
  year = {2024},
  publisher = {APS},
  doi = {10.1103/PhysRevResearch.6.033217},
  url = {https://doi.org/10.1103/PhysRevResearch.6.033217}
}

@article{vardhan2024petz,
  author = {Vardhan, Shreya and Wei, Annie Y and Zou, Yijian},
  title = {Petz recovery from subsystems in conformal field theory},
  journal = {Journal of High Energy Physics},
  volume = {2024},
  number = {3},
  pages = {1--67},
  year = {2024},
  publisher = {Springer},
  doi = {10.1007/JHEP03(2024)016},
  url = {https://doi.org/10.1007/JHEP03(2024)016}
}

@article{hu2024petz,
  author = {Hu, Yangrui and Zou, Yijian},
  title = {Petz map recovery for long-range entangled quantum many-body states},
  journal = {Physical Review B},
  volume = {110},
  number = {19},
  pages = {195107},
  year = {2024},
  publisher = {APS},
  doi = {10.1103/PhysRevB.110.195107},
  url = {https://doi.org/10.1103/PhysRevB.110.195107}
}

@article{bohnet2016quantum,
  author = {Bohnet, Justin G and Sawyer, Brian C and Britton, Joseph W and Wall, Michael L and Rey, Ana Maria and Foss-Feig, Michael and Bollinger, John J},
  title = {Quantum spin dynamics and entanglement generation with hundreds of trapped ions},
  journal = {Science},
  volume = {352},
  number = {6291},
  pages = {1297--1301},
  year = {2016},
  publisher = {American Association for the Advancement of Science},
  doi = {10.1126/science.aad9958},
  url = {https://doi.org/10.1126/science.aad9958}
}

@article{arute2019quantum,
  author = {Arute, Frank and Arya, Kunal and Babbush, Ryan and Bacon, Dave and Bardin, Joseph C and Barends, Rami and Biswas, Rupak and Boixo, Sergio and Brandao, Fernando GSL and Buell, David A and others},
  title = {Quantum supremacy using a programmable superconducting processor},
  journal = {Nature},
  volume = {574},
  number = {7779},
  pages = {505--510},
  year = {2019},
  publisher = {Nature Publishing Group UK London},
  doi = {10.1038/s41586-019-1666-5},
  url = {https://doi.org/10.1038/s41586-019-1666-5}
}

@article{magesan2011scalable,
  author = {Magesan, Easwar and Gambetta, Jay M and Emerson, Joseph},
  title = {Scalable and robust randomized benchmarking of quantum processes},
  journal = {Physical review letters},
  volume = {106},
  number = {18},
  pages = {180504},
  year = {2011},
  publisher = {APS},
  doi = {10.1103/PhysRevLett.106.180504},
  url = {https://doi.org/10.1103/PhysRevLett.106.180504}
}

@article{gambetta2012characterization,
  author = {Gambetta, Jay M. and C{\'o}rcoles, A. D. and Merkel, Seth T. and Johnson, B. R. and Smolin, John A. and Chow, Jerry M. and Ryan, Colm A. and Rigetti, Chad and Poletto, Stefano and Ohki, Thomas A. and Ketchen, Mark B. and Steffen, Matthias},
  title = {Characterization of addressability by simultaneous randomized benchmarking},
  journal = {Physical Review Letters},
  volume = {109},
  number = {24},
  pages = {240504},
  year = {2012},
  publisher = {American Physical Society},
  doi = {10.1103/PhysRevLett.109.240504},
  url = {https://doi.org/10.1103/PhysRevLett.109.240504}
}

@article{levy2024classical,
  author = {Levy, Ryan and Luo, Di and Clark, Bryan K.},
  title = {Classical shadows for quantum process tomography on near-term quantum computers},
  journal = {Physical Review Research},
  volume = {6},
  number = {1},
  pages = {013029},
  year = {2024},
  publisher = {American Physical Society},
  doi = {10.1103/PhysRevResearch.6.013029},
  url = {https://doi.org/10.1103/PhysRevResearch.6.013029}
}

@article{mckay2020correlated,
  author = {McKay, David C. and Cross, Andrew W. and Wood, Christopher J. and Gambetta, Jay M.},
  title = {Correlated randomized benchmarking},
  journal = {arXiv preprint arXiv:2003.02354},
  year = {2020},
  doi = {10.48550/arXiv.2003.02354},
  url = {https://doi.org/10.48550/arXiv.2003.02354},
  eprint = {2003.02354},
  archiveprefix = {arXiv},
  primaryclass = {quant-ph}
}

@article{helsen2022framework,
  author = {Helsen, J. and Roth, I. and Onorati, E. and Werner, A. H. and Eisert, J.},
  title = {General Framework for Randomized Benchmarking},
  journal = {PRX Quantum},
  volume = {3},
  number = {2},
  pages = {020357},
  year = {2022},
  month = {jun},
  publisher = {American Physical Society},
  doi = {10.1103/PRXQuantum.3.020357},
  url = {https://doi.org/10.1103/PRXQuantum.3.020357}
}

@article{huang2023learning,
  author = {Huang, Hsin-Yuan and Chen, Sitan and Preskill, John},
  title = {Learning to predict arbitrary quantum processes},
  journal = {PRX Quantum},
  volume = {4},
  number = {4},
  pages = {040337},
  year = {2023},
  month = {dec},
  publisher = {American Physical Society},
  doi = {10.1103/PRXQuantum.4.040337},
  url = {https://doi.org/10.1103/PRXQuantum.4.040337}
}

@article{wang2024robust,
  author = {Wang, Yuqing and Liu, Guoding and Liu, Zhenhuan and Tang, Yifan and Ma, Xiongfeng and Dai, Hao},
  title = {Robust estimation of nonlinear properties of quantum processes},
  journal = {Physical Review A},
  volume = {110},
  pages = {032415},
  year = {2024},
  month = {sep},
  publisher = {American Physical Society},
  doi = {10.1103/PhysRevA.110.032415},
  url = {https://doi.org/10.1103/PhysRevA.110.032415}
}

@article{helsen2023shadow,
  author = {Helsen, J. and Ioannou, M. and Kitzinger, J. and Onorati, E. and Werner, A. H. and Eisert, J. and Roth, I.},
  title = {Shadow estimation of gate-set properties from random sequences},
  journal = {Nature Communications},
  volume = {14},
  number = {1},
  pages = {5039},
  year = {2023},
  month = {aug},
  publisher = {Nature Publishing Group},
  doi = {10.1038/s41467-023-39382-9},
  url = {https://doi.org/10.1038/s41467-023-39382-9}
}

@article{kunjummen2023shadow,
  author = {Kunjummen, Jonathan and Tran, Minh C. and Carney, Daniel and Taylor, Jacob M.},
  title = {Shadow process tomography of quantum channels},
  journal = {Physical Review A},
  volume = {107},
  number = {4},
  pages = {042403},
  year = {2023},
  month = {apr},
  publisher = {American Physical Society},
  doi = {10.1103/PhysRevA.107.042403},
  url = {https://doi.org/10.1103/PhysRevA.107.042403}
}

@article{Sarovar2020detectingcrosstalk,
  author = {Sarovar, Mohan and Proctor, Timothy and Rudinger, Kenneth and Young, Kevin and Nielsen, Erik and Blume-Kohout, Robin},
  title = {Detecting crosstalk errors in quantum information processors},
  journal = {Quantum},
  volume = {4},
  pages = {321},
  year = {2020},
  month = {sep},
  doi = {10.22331/q-2020-09-11-321},
  url = {https://doi.org/10.22331/q-2020-09-11-321}
}

@article{zhong2020quantum,
  author = {Zhong, Han-Sen and Wang, Hui and Deng, Yu-Hao and Chen, Ming-Cheng and Peng, Li-Chao and Luo, Yi-Han and Qin, Jian and Wu, Dian and Ding, Xing and Hu, Yi and others},
  title = {Quantum computational advantage using photons},
  journal = {Science},
  volume = {370},
  number = {6523},
  pages = {1460--1463},
  year = {2020},
  publisher = {American Association for the Advancement of Science},
  doi = {10.1126/science.abe8770},
  url = {https://doi.org/10.1126/science.abe8770}
}

@article{wu2021strong,
  author = {Wu, Yulin and Bao, Wan-Su and Cao, Sirui and Chen, Fusheng and Chen, Ming-Cheng and Chen, Xiawei and Chung, Tung-Hsun and Deng, Hui and Du, Yajie and Fan, Daojin and others},
  title = {Strong quantum computational advantage using a superconducting quantum processor},
  journal = {arXiv preprint arXiv:2106.14734},
  year = {2021},
  doi = {10.48550/arXiv.2106.14734},
  url = {https://doi.org/10.48550/arXiv.2106.14734},
  eprint = {2106.14734},
  archiveprefix = {arXiv}
}

@article{madsen2022quantum,
  author = {Madsen, Lars S and Laudenbach, Fabian and Askarani, Mohsen Falamarzi and Rortais, Fabien and Vincent, Trevor and Bulmer, Jacob FF and Miatto, Filippo M and Neuhaus, Leonhard and Helt, Lukas G and Collins, Matthew J and others},
  title = {Quantum computational advantage with a programmable photonic processor},
  journal = {Nature},
  volume = {606},
  number = {7912},
  pages = {75--81},
  year = {2022},
  publisher = {Nature Publishing Group UK London},
  doi = {10.1038/s41586-022-04725-x},
  url = {https://doi.org/10.1038/s41586-022-04725-x}
}

@article{gray2018quimb,
  author = {Gray, Johnnie},
  title = {quimb: A python package for quantum information and many-body calculations},
  journal = {Journal of Open Source Software},
  volume = {3},
  number = {29},
  pages = {819},
  year = {2018},
  doi = {10.21105/joss.00819},
  url = {https://doi.org/10.21105/joss.00819}
}

@article{diamond2016cvxpy,
  author = {Diamond, Steven and Boyd, Stephen},
  title = {CVXPY: A Python-embedded modeling language for convex optimization},
  journal = {Journal of Machine Learning Research},
  volume = {17},
  number = {83},
  pages = {1--5},
  year = {2016},
  url = {https://jmlr.org/papers/v17/15-408.html}
}

@article{broadbent2019zero,
  author = {Broadbent, Anne and Grilo, Alex B},
  title = {Zero-knowledge for QMA from locally simulatable proofs},
  journal = {arXiv preprint arXiv:1911.07782},
  volume = {1},
  number = {1},
  pages = {6--1},
  year = {2019},
  doi = {10.48550/arXiv.1911.07782},
  url = {https://doi.org/10.48550/arXiv.1911.07782},
  eprint = {1911.07782},
  archiveprefix = {arXiv}
}

@article{hsieh2022quantum,
  author = {Hsieh, Chung-Yun and Lostaglio, Matteo and Ac{\'\i}n, Antonio},
  title = {Quantum channel marginal problem},
  journal = {Physical Review Research},
  volume = {4},
  number = {1},
  pages = {013249},
  year = {2022},
  publisher = {APS},
  doi = {10.1103/PhysRevResearch.4.013249},
  url = {https://doi.org/10.1103/PhysRevResearch.4.013249}
}

@article{kim2014bounds,
  author = {Kim, Isaac and Ruskai, Mary Beth},
  title = {Bounds on the concavity of quantum entropy},
  journal = {Journal of Mathematical Physics},
  volume = {55},
  number = {9},
  year = {2014},
  publisher = {AIP Publishing},
  doi = {10.1063/1.4895757},
  url = {https://doi.org/10.1063/1.4895757}
}

@article{chen2025quantum,
  author = {Chen, Chi-Fang and Rouz{\'e}, Cambyse},
  title = {Quantum Gibbs states are locally Markovian},
  journal = {arXiv preprint arXiv:2504.02208},
  year = {2025},
  doi = {10.48550/arXiv.2504.02208},
  url = {https://doi.org/10.48550/arXiv.2504.02208},
  eprint = {2504.02208},
  archiveprefix = {arXiv}
}

@article{kuwahara2025clustering,
  author = {Kuwahara, Tomotaka},
  title = {Clustering of conditional mutual information and quantum Markov structure at arbitrary temperatures},
  journal = {Physical Review X},
  volume = {15},
  number = {4},
  pages = {041010},
  year = {2025},
  publisher = {APS},
  doi = {https://doi.org/10.1103/9hx7-pzxw},
  url = {https://journals.aps.org/prx/abstract/10.1103/9hx7-pzxw}
}

@article{kato2019quantum,
  author = {Kato, Kohtaro and Brandao, Fernando GSL},
  title = {Quantum approximate Markov chains are thermal},
  journal = {Communications in Mathematical Physics},
  volume = {370},
  number = {1},
  pages = {117--149},
  year = {2019},
  publisher = {Springer},
  doi = {10.1007/s00220-019-03485-6},
  url = {https://doi.org/10.1007/s00220-019-03485-6}
}

@article{kim2014informational,
  author = {Kim, Isaac H},
  title = {On the informational completeness of local observables},
  journal = {arXiv preprint arXiv:1405.0137},
  year = {2014},
  doi = {10.48550/arXiv.1405.0137},
  url = {https://doi.org/10.48550/arXiv.1405.0137},
  eprint = {1405.0137},
  archiveprefix = {arXiv}
}

@article{ferris2012algorithms,
  author = {Ferris, Andrew J and Poulin, David},
  title = {Algorithms for the Markov entropy decomposition},
  journal = {arXiv preprint arXiv:1212.1442},
  year = {2012},
  doi = {10.48550/arXiv.1212.1442},
  url = {https://doi.org/10.48550/arXiv.1212.1442},
  eprint = {1212.1442},
  archiveprefix = {arXiv}
}

@book{watrous2018theory,
  author = {Watrous, John},
  title = {The theory of quantum information},
  year = {2018},
  publisher = {Cambridge university press},
  doi = {10.1017/9781316848142},
  url = {https://doi.org/10.1017/9781316848142},
  isbn = {9781107180567}
}

@article{liu2007quantum,
  author = {Liu, Yi-Kai and Christandl, Matthias and Verstraete, Frank},
  title = {Quantum computational complexity of the N-representability problem: QMA complete},
  journal = {Physical review letters},
  volume = {98},
  number = {11},
  pages = {110503},
  year = {2007},
  publisher = {APS},
  doi = {10.1103/PhysRevLett.98.110503},
  url = {https://doi.org/10.1103/PhysRevLett.98.110503}
}

@inproceedings{liu2006consistency,
  author = {Liu, Yi-Kai},
  title = {Consistency of local density matrices is QMA-complete},
  booktitle = {International Workshop on Approximation Algorithms for Combinatorial Optimization},
  pages = {438--449},
  year = {2006},
  organization = {Springer},
  doi = {10.1007/11830924_40},
  url = {https://doi.org/10.1007/11830924_40}
}

@article{schilling2014quantum,
  author = {Schilling, Christian and others},
  title = {The quantum marginal problem},
  journal = {Mathematical results in quantum mechanics},
  pages = {165--176},
  year = {2014},
  publisher = {World Scientific},
  doi = {10.1142/9789814618144_0010},
  url = {https://doi.org/10.1142/9789814618144_0010}
}

\appendix
\onecolumngrid        
\newpage

\section{Definitions and notation}
\label{Appendix:DefinitionsNotation}
Given a Hilbert space $\mathcal{H}$, we denote by $\mathcal{B}(\mathcal{H})$ the space of bounded linear operators acting on $\mathcal{H}$. We define $\mathcal{D}(\mathcal{H}) := \{ \rho \in \mathcal{B}(\mathcal{H}) : \rho \geq 0, \tr(\rho)=1 \}$ as the space of density operators on $\mathcal{H}$. We refer to a map $\mathcal{R}: \mathcal{B}(\mathcal{H}_1) \to \mathcal{B}(\mathcal{H}_2)$ as a superoperator from $\mathcal{H}_1$ to $\mathcal{H}_2$ and denote the space of all such superoperators by $\mathcal{S}(\mathcal{H}_1, \mathcal{H}_2)$.

We consider a system composed of $n$ sites, which we label using the set of integers $[n] = \{1, 2, \ldots, n\}$. Each site is described by a local Hilbert space $\mathbb{C}^d$ and we denote the full Hilbert space of the system by $\mathcal{H}^{(n)} = (\mathbb{C}^d)^{\otimes n}$. We will often deal with subsystems, which we specify in terms of the indices $A \subseteq [n]$ of the included sites. We denote the corresponding Hilbert space by $\mathcal{H}^{(n)}(A)$. Similarly, for any $\rho \in \mathcal{D}(\mathcal{H}^{(n)})$, the state $\rho(A) \in \mathcal{D}(\mathcal{H}^{(n)}(A))$ denotes the reduced density operator on subsystem $A$, i.e., $\rho(A) = \tr_{[n] \setminus A} (\rho)$. For multiple subsystems, we introduce the shorthands $AB = A \cup B$, $ABC = A \cup B \cup C$, etc. In a slight abuse of notation, we will sometimes use a string of site labels to directly specify the subsystem. For example, the state $\rho(i_1, i_2, \ldots, i_k) \in \mathcal{D}(\mathcal{H}^{(n)}(i_1, i_2, \ldots, i_k))$ should be understood as $\rho(A) \in \mathcal{D}(\mathcal{H}^{(n)}(A))$, where $A = \{i_1, i_2, \ldots, i_k \}$. To avoid confusion, we use lower case letters for site labels and upper case letters for subsystems.

Several subsystems occur frequently throughout the proofs. Given an integer $w \geq 1$, we let $A_i = \{i - 2w + 1, \ldots i-w\}$, $B_i = \{i-w+1, \ldots, i\}$, and $C_i =  \{i+1\}$. Moreover, we define the subsystems making up the left and right edges, $L_i = \{1, \ldots i-2w\}$ and $R_i = \{i+2, \ldots, n\}$, as well as the respective unions $\bar{A}_i = L_i \cup A_i$ and $\bar{C}_i = C_i \cup R_i$. We additionally use the shorthand $X_i = \{i-2w, \ldots, i\} = A_{i-1} \cup B_{i-1} \cup C_{i-1}$. We employ this notation consistently throughout, but will nevertheless include the relevant definitions in all formal statements for convenience.

For functions of states such as entropies and the CMI, we use subscripts to indicate on which state the function is applied and brackets to specify the relevant subsystems. For example, $S_\rho(A) = - \tr[ \rho(A) \log \rho(A)]$ denotes the von Neumann entropy of the reduced state $\rho(A)$. Similarly, $S_\rho(A \mid B) = S_\rho(AB) - S_\rho(B)$ is the conditional entropy and $I_\rho(A : C \mid B) = S_\rho(A B) + S_\rho(B C) - S_\rho(ABC) - S_\rho(B)$ the CMI. Again, we sometimes use the site labels directly. For instance, $I_\rho(1, \ldots, i-w : i+1, \ldots, n \mid i-w+1, \ldots i)$ is equivalent to $I_\rho(\bar{A}_i : \bar{C}_i \mid B_i)$.

Using the above notation, we formalize the notion of states with exponentially decaying CMI as follows.
\begin{definition}[Exponentially decaying CMI]
    \label{def:exp_cmi}
    Consider three subsystems $A, B, C \subset [n]$ with the definite ordering $i_A < i_B < i_C$ for all $i_A \in A$, $i_B \in B$, $i_C \in C$. We say that a state $\rho \in \mathcal{D}(\mathcal{H}^{(n)})$ has exponentially decaying CMI if for any such subsystems there exists constants $a$ and $\xi$ such that
    \begin{equation}
        I_\rho(A : C \mid B) \leq a \, e^{- |B| / \xi} .
    \end{equation}
\end{definition}

The Markov entropy~\cite{poulin2011markov,ferris2012algorithms} defined as follows serves as a convenient technical tool.
\begin{definition}[Markov entropy]
    \label{def:markov_entropy}
    Given a quantum state $\rho \in \mathcal{D}(\mathcal{H}^{(n)})$, we define the multipartite Markov entropy of shielding width $w \geq 1$ as
    \begin{equation}
        S_\rho^M := \sum_{i=w+1}^{n} S_\rho(i-w \mid B_{i}) + S_\rho(B_{n}) ,
    \end{equation}
    where $B_i = \{i-w+1, \ldots, i\}$.
\end{definition}

Markov entropy has the important property that equals the von Neumann entropy when $\rho$ is an exact Markov state with correlation length less than $w$, as evident from the following lemma.
\begin{lemma}
    \label{lem:markov_cmi}
    Given a quantum state $\rho \in \mathcal{D}(\mathcal{H}^{(n)})$, the Markov entropy of shielding width $w$ satisfies
    \begin{equation}
        S_\rho^M = S_\rho + \sum_{i = w+1}^{n-1} I_\rho(i-w : \bar{C}_i \mid B_i) ,
    \end{equation}
    where $B_i = \{i-w+1, \ldots, i\}$ and $\bar{C}_i = \{i+1, \ldots, n\}$.
\end{lemma}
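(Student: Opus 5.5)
We prove the identity by a telescoping decomposition of the global entropy via the chain rule, matching each term against the corresponding conditional entropy in the definition of $S^M_\rho$. The chain rule, applied from left to right, gives
\begin{equation}
    S_\rho = S_\rho(1,\ldots,n) = \sum_{i=w+1}^{n} S_\rho(i-w \mid i-w+1,\ldots,n) + S_\rho(B_n),
\end{equation}
since $\{n-w+1,\ldots,n\} = B_n$. The identity $S_\rho(i-w,\ldots,n) = S_\rho(i-w \mid i-w+1,\ldots,n) + S_\rho(i-w+1,\ldots,n)$ is iterated from $i=w+1$ up to $i=n$, and the chain stops at the block $B_n$. We check the indexing by noting that the terms cover the sites $1,\ldots,n-w$ one at a time, which accounts for the sites outside $B_n$.

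Next, for each $i$ we write $\{i-w+1,\ldots,n\} = B_i \cup \bar{C}_i$, where $\bar{C}_i = \{i+1,\ldots,n\}$. We then use the defining identity of the conditional mutual information in conditional-entropy form:
\begin{equation}
    S_\rho(i-w \mid B_i) - S_\rho(i-w \mid B_i \bar{C}_i) = I_\rho(i-w : \bar{C}_i \mid B_i).
\end{equation}
This is verified by expanding both sides in terms of $S_\rho(i-w,B_i)$, $S_\rho(B_i)$, $S_\rho(i-w,B_i,\bar C_i)$ and $S_\rho(B_i\bar C_i)$, and comparing with the definition $I_\rho(A:C\mid B)=S_\rho(AB)+S_\rho(BC)-S_\rho(ABC)-S_\rho(B)$.

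Substituting this into the chain-rule expansion gives
\begin{equation}
    S_\rho = \sum_{i=w+1}^{n} \bigl[S_\rho(i-w\mid B_i) - I_\rho(i-w:\bar C_i\mid B_i)\bigr] + S_\rho(B_n) = S^M_\rho - \sum_{i=w+1}^{n} I_\rho(i-w:\bar C_i\mid B_i).
\end{equation}
For $i=n$ the set $\bar{C}_n$ is empty, so that CMI vanishes and the sum can be truncated at $n-1$. This yields the claimed identity.

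The argument is purely algebraic, so there is no real obstacle. The only points needing care are the boundary conventions: the last chain-rule term leaves exactly $S_\rho(B_n)$, and the $i=n$ term drops because it conditions on nothing beyond $B_n$. It is worth noting as a corollary that strong subadditivity gives $S^M_\rho \ge S_\rho$, which is how the lemma will be used later.
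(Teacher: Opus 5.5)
Your proof is correct and is essentially the paper's argument: both write each CMI as $S_\rho(i-w \mid B_i) - S_\rho(i-w \mid B_i \bar{C}_i)$ and telescope the second terms with the chain rule. The only difference is at the boundary: you run the chain rule through $i=n$ and drop the vanishing CMI with empty $\bar{C}_n$, whereas the paper stops at $n-1$ and uses $S_\rho(n-w,\ldots,n) = S_\rho(n-w \mid B_n) + S_\rho(B_n)$.
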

\begin{proof}
    By the definition of the CMI,
    \begin{equation}
        \sum_{i = w+1}^{n-1} I_\rho(i-w : \bar{C}_i \mid B_i) = \sum_{i = w+1}^{n-1} \left[ S_\rho(i-w \mid B_i) - S_\rho(i-w \mid B_i \cup \bar{C}_i) \right].
    \end{equation}
    The second term is a telescoping sum that simplifies to
    \begin{align}
        \sum_{i = w+1}^{n-1} S_\rho(i-w \mid B_i \cup \bar{C}_i) &= \sum_{i=w+1}^{n-1} \left[ S_\rho(i-w, \ldots, n) - S_\rho(i - w + 1, \ldots, n) \right] \\
        &= S_\rho - S_\rho(n - w, \ldots, n).
    \end{align}
    By observing that $S_\rho(n - w, \ldots, n) = S_\rho(n - w \mid B_{n}) + S_\rho(B_{n})$, we obtain the desired result
    \begin{equation}
        \sum_{i = w+1}^{n-1} I_\rho(i-w : \bar{C}_i \mid B_i) = \sum_{i = w+1}^{n} S_\rho(i-w \mid B_i) + S_\rho(B_{n}) - S_\rho = S_\rho^M - S_\rho.
    \end{equation}
\end{proof}

Our reconstruction algorithm is based on sequentially adding sites using recovery maps that only act on a small region. Concretely, when applied to a state defined on the sites $1, 2, \ldots, i$, the CPTP map $\mathcal{R}_i$ adds the site $i + 1$. We assume throughout that $\mathcal{R}_i$ only acts on the region of width $w$ preceding $i$, i.e., the subsystem $B_i = \{i - w + 1, \ldots, i\}$. This leads us to formally define sequentially generated states for the purpose of this work as follows.
\begin{definition}[Sequentially generated states]
    \label{def:sequentially}
    Consider the Hilbert space $\mathcal{H}^{(n)}$ and the reconstruction width $w \geq 1$. Let $\bar{A}_i = \{1, 2, \ldots, i - w \}$ and $B_i = \{i - w + 1, i - w + 2, \ldots, i\}$ for $w+1 \leq i \leq n$. We say that the states $\sigma_i \in \mathcal{D}(\mathcal{H}^{(n)}(\bar{A}_i B_i))$ are sequentially generated if they satisfies $\sigma_{i+1} = (\mathcal{I}_{\bar{A}_i} \otimes \mathcal{R}_i) [\sigma_i]$ for all $w + 1 \leq i \leq n-1$, where $\mathcal{I}_{\bar{A}_i} \in \mathcal{S}(\mathcal{H}^{(n)}(\bar{A}_i), \mathcal{H}^{(n)}(\bar{A}_i))$ is the identity channel and $\mathcal{R}_i \in \mathcal{S}(\mathcal{H}^{(n)}(B_i), \mathcal{H}^{(n)}(B_i \cup \{i+1\}))$ is a CPTP map.
\end{definition}

Our approach is to choose the recovery maps $\mathcal{R}_i$ such that the distance from estimates $\hat{\rho}_i$ of local reduced density matrices is minimized at each step. We formalize this notion as follows.
\begin{definition}[Locally optimal recovery]
    \label{def:optimal}
    Consider the Hilbert space $\mathcal{H}^{(n)}$ and a set of sequentially generated states $\sigma_i \in \mathcal{D}(\mathcal{H}^{(n)}(1, 2, \ldots, i))$ with reconstruction width $w$ according to Definition~\ref{def:sequentially}. Given an operator $\hat{\rho}_{i+1}$ (not necessarily a valid quantum states) on $\mathcal{H}^{(n)}(i-2w+1, \ldots, i+1)$, we say that the recovery is locally optimal with respect to $\hat{\rho}_{i+1}$ if the map $\mathcal{R}_i$ of Definition~\ref{def:sequentially} minimizes the trace distance $\| \hat{\rho}_{i+1} - (\mathcal{I}_{A_i} \otimes \mathcal{R}_i) [\tr_{L_i}(\sigma_i)] \|_1$, where $A_i = \{i-2w+1, \ldots i-w\}$ and $L_i = \{1, \ldots, i-2w\}$.
\end{definition}

\section{Proof of Theorem~\ref{thm:end-to-end-main}}
\label{Appendix:Proof_theorem2}

We first establish two lemmas concerning the continuity of the CMI and the Markov entropy. Both lemmas are standard results expressed in a convenient fashion for our purposes.
\begin{lemma}[Continuity of the CMI]
    \label{lem:cmi_continuity} Consider two quantum states $\rho$ and $\sigma$ in a composite Hilbert space $\mathcal{H}_A \otimes \mathcal{H}_B \otimes \mathcal{H}_C$ with local dimensions $\dim \mathcal{H_A} = d_A$, $\dim \mathcal{H_B} = d_B$, and $\dim \mathcal{H_C} = d_C$. If $\tfrac12 \| \rho - \sigma \|_1 \leq \varepsilon \leq 1$, then
    \begin{equation}
        \left| I_\rho(A : C \mid B) - I_\sigma(A : C \mid B) \right| \leq 8 \varepsilon \log_2(d_A d_C / \varepsilon).
    \end{equation}
\end{lemma}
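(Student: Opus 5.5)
The plan is to write the CMI as a difference of two conditional entropies in which the conditioning system is never $B$ alone, and then apply the Alicki--Fannes--Winter continuity bound for conditional entropy to each term. Concretely, I would use
\begin{equation}
    I_\rho(A : C \mid B) = S_\rho(C \mid B) - S_\rho(C \mid AB),
\end{equation}
and the symmetric decomposition $I_\rho(A:C\mid B) = S_\rho(A\mid B) - S_\rho(A\mid BC)$, choosing whichever is convenient. The point is that the resulting bound should depend only on the dimension of the system being conditioned on (here $C$ or $A$), not on $d_B$. This explains why $d_B$ is absent from the stated bound.

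The key tool is Winter's tight continuity bound \cite{winter2016tight}. If $\tfrac12\|\rho-\sigma\|_1\le\varepsilon\le 1$, then for any bipartition $X|Y$,
\begin{equation}
    |S_\rho(X\mid Y)-S_\sigma(X\mid Y)| \le 2\varepsilon\log_2 d_X + (1+\varepsilon)\,h\!\left(\tfrac{\varepsilon}{1+\varepsilon}\right),
\end{equation}
where $h$ is the binary entropy. The conditional entropies above involve reduced states, and by monotonicity of the trace norm under partial trace these reduced states are still $\varepsilon$-close. Applying the bound to $S(C\mid B)$ and $S(C\mid AB)$ and using the triangle inequality gives
\begin{equation}
    |\Delta I| \le 4\varepsilon\log_2 d_C + 2(1+\varepsilon)h\!\left(\tfrac{\varepsilon}{1+\varepsilon}\right).
\end{equation}
This is already at most the claimed quantity. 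To get a form symmetric in $A$ and $C$, which matches the stated $\log_2(d_Ad_C/\varepsilon)$, I could instead bound $|\Delta I|$ by the average of the two decompositions, or simply use $\log_2 d_C \le \log_2(d_Ad_C)$.

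The remaining step is elementary: absorb the binary-entropy term into $\varepsilon\log_2(1/\varepsilon)$. I would use the standard estimate $(1+\varepsilon)h(\varepsilon/(1+\varepsilon)) \le \varepsilon\log_2(1/\varepsilon) + c\,\varepsilon$ for some small constant $c$; for instance $g(\varepsilon) := (1+\varepsilon)h(\varepsilon/(1+\varepsilon)) = (1+\varepsilon)\log_2(1+\varepsilon) - \varepsilon\log_2\varepsilon \le \varepsilon\log_2(1/\varepsilon) + 2\varepsilon$ for $\varepsilon\le1$. Then
\begin{equation}
    |\Delta I| \le 4\varepsilon\log_2 d_C + 2\varepsilon\log_2(1/\varepsilon) + 4\varepsilon.
\end{equation}
Since $d_A,d_C\ge 2$ (so that $4\varepsilon \le 4\varepsilon\log_2(d_Ad_C)$) and $\log_2(1/\varepsilon)\ge 0$, the whole expression is at most $8\varepsilon\log_2(d_Ad_C) + 8\varepsilon\log_2(1/\varepsilon) = 8\varepsilon\log_2(d_Ad_C/\varepsilon)$. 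If either $d_A$ or $d_C$ equals 1, the CMI vanishes identically and the statement is trivial.

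The only real subtlety is the first step: choosing a decomposition that avoids $d_B$. A naive approach would apply Fannes--Audenaert to the four entropies in the definition separately, which produces $\log d_B$ terms. The conditional-entropy form together with Winter's bound, whose dimension factor involves only the conditioned system, removes them. After that, the constant-chasing with the binary entropy is routine and leaves generous slack in the factor 8.
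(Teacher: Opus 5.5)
Your proof is correct. It differs from the paper's only in how the CMI is decomposed.

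The paper writes $I(A:C\mid B)=S(A\mid B)+S(C\mid B)-S(AC\mid B)$ and applies Winter's bound three times, with dimension factors $d_A$, $d_C$ and $d_Ad_C$. This gives $\varepsilon\log_2(64\,d_A^4d_C^4/\varepsilon^3)$, which it then relaxes to the stated form. You use the chain rule $I(A:C\mid B)=S(C\mid B)-S(C\mid AB)$ (or its mirror image). That needs only two applications of the bound, and only the dimension of the conditioned system enters, giving $4\varepsilon\log_2\min(d_A,d_C)+2(1+\varepsilon)h\bigl(\varepsilon/(1+\varepsilon)\bigr)$.

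Both arguments rest on the same two ingredients:
\begin{itemize}
\item Winter's conditional-entropy bound does not depend on the dimension of the conditioning system, which is why $d_B$ drops out.
\item The estimate $(1+\varepsilon)h\bigl(\varepsilon/(1+\varepsilon)\bigr)\le\varepsilon\log_2(4/\varepsilon)$ holds. This is exactly your bound with the $2\varepsilon$ term, and your convexity justification for it on $[0,1]$ is fine.
\end{itemize}

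The paper's three-term split is symmetric in $A$ and $C$ and lands directly on the $d_Ad_C$ form in the statement. Yours is strictly sharper, and that matters downstream. In Lemma~\ref{lem:r_continuity} the CMI lemma is applied with $A=\{i-w\}$, a single site, and $C=B_{i+w}$, which has $w$ sites. Conditioning on the single site would replace $8\varepsilon\log_2(d^{w+1}/\varepsilon)$ by a $w$-independent $8\varepsilon\log_2(d/\varepsilon)$. That removes the factor $w$ from the error term $16wn\varepsilon\log_2(d/\varepsilon)$ there, and correspondingly one power of $w$ in the proof of Theorem~\ref{thm:distance}.
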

\begin{proof}
    The conditional entropy $S_\rho(A \mid B) = S_\rho(A B) - S_\rho(B)$ satisfies the continuity bound~\cite{winter2016tight}
    \begin{equation}
        |S_\rho(A \mid B) - S_\sigma(A \mid B)| \leq 2 \varepsilon \log_2 d_A + (1 + \varepsilon) H(\varepsilon / (1 + \varepsilon)),
    \end{equation}
    where $H(x) = - x \log_2 x - (1-x) \log_2 (1-x)$ is the binary entropy function. Here, we used the fact that $\tfrac12 \| \rho(AB) - \sigma(AB) \|_1 \leq \tfrac12 \| \rho - \sigma \|_1 \leq \varepsilon$. Using the bound $(1+x) H(x/(1+x)) \leq - x \log_2(x/4)$, we can simplify the expression to
    \begin{equation}
        |S_\rho(A \mid B) - S_\sigma(A \mid B)| \leq \varepsilon \log_2 (4 d_A^2 / \varepsilon)  .
    \end{equation}
    Similarly, we obtain
    \begin{align}
        |S_\rho(C \mid B) - S_\sigma(C \mid B)| &\leq  \varepsilon \log_2 (4 d_C^2 / \varepsilon) ,\\
        |S_\rho(A C \mid B) - S_\sigma(A C \mid B)| &\leq \varepsilon \log_2 (4 d_A^2 d_C^2 / \varepsilon) .
    \end{align}
    To complete the proof, we express the CMI in terms of the conditional entropy as
    \begin{equation}
        I_\rho(A : C \mid B) = S_\rho(A \mid B) + S_\rho(C \mid B) - S_\rho( A C \mid B).
    \end{equation}
    Using the triangle inequality and the fact that $d_A \geq 2$, $d_C \geq 2$, and $\varepsilon \leq 1$, we obtain the desired result
    \begin{equation}
        \left| I_\rho(A : C \mid B) - I_\sigma(A : C \mid B) \right| \leq \varepsilon \log_2 (64 d_A^4 d_C^4 / \varepsilon^3) \leq 8 \varepsilon \log_2(d_A d_C / \varepsilon).
    \end{equation}
\end{proof}

\begin{lemma}[Continuity of the Markov entropy]
    \label{lem:markov_continuity}
    Consider two quantum states $\rho \in \mathcal{D}(\mathcal{H}^{(n)})$ and $\sigma \in \mathcal{D}(\mathcal{H}^{(n)})$. Given a shielding width $w \geq 1$ and assuming that the reduced density matrices satisfy $\tfrac12 \| \rho(i-w, \ldots, i) - \sigma(i-w, \ldots, i) \|_1 \leq \varepsilon \leq 1$ for all $w+1 \leq i \leq n$, the difference in Markov entropy is bounded by
    \begin{equation}
        \left| S_\rho^M - S^M_\sigma \right| \leq 4 n \varepsilon \log_2 (d / \varepsilon)  .
    \end{equation}
\end{lemma}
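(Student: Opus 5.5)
The plan is to write $S^M_\rho - S^M_\sigma$ term by term using Definition~\ref{def:markov_entropy} and bound each term with the conditional-entropy continuity estimate already used in the proof of Lemma~\ref{lem:cmi_continuity}. The Markov entropy consists of the $n-w$ conditional entropies $S(i-w \mid B_i)$ for $w+1 \le i \le n$, plus the single marginal entropy $S(B_n)$. The conditional entropy $S(i-w \mid B_i)$ depends only on the reduced state on $\{i-w,\ldots,i\}$, and by hypothesis the two states are $\varepsilon$-close there in trace distance. The Alicki--Fannes--Winter bound~\cite{winter2016tight}, simplified as in Lemma~\ref{lem:cmi_continuity}, then gives
\begin{equation}
|S_\rho(i-w\mid B_i) - S_\sigma(i-w\mid B_i)| \le \varepsilon \log_2(4d^2/\varepsilon) ,
\end{equation}
since the conditioned system is a single site of dimension $d$.

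For the boundary term $S(B_n)$ I will not apply Fannes directly. Doing so would produce a factor $w\log d$, which does not fit the claimed bound. Instead, I will telescope $S(B_n)$ into single-site conditional entropies:
\begin{equation}
S(B_n) = \sum_{j=n-w+1}^{n} S(j \mid j+1,\ldots,n) ,
\end{equation}
where the last term is the unconditional $S(n)$. Each piece is supported inside $\{n-w,\ldots,n\}$, so the trace distance is again at most $\varepsilon$. Winter's bound for the conditional entropy (or Fannes--Audenaert for the unconditional term) then yields at most $\varepsilon\log_2(4d^2/\varepsilon)$ per piece. Altogether this gives $n$ terms, each bounded by $\varepsilon\log_2(4d^2/\varepsilon)$.

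The final step is purely arithmetic: show $\log_2(4d^2/\varepsilon) \le 4\log_2(d/\varepsilon)$. This is equivalent to $4d^2/\varepsilon \le d^4/\varepsilon^4$, i.e.\ $4\varepsilon^3 \le d^2$. That holds because $d\ge 2$ and $\varepsilon\le 1$. Summing over the $n$ terms gives $|S^M_\rho - S^M_\sigma| \le 4n\varepsilon\log_2(d/\varepsilon)$.

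The main obstacle is the boundary term. The whole argument depends on decomposing $S(B_n)$ so that the bound stays linear in $n$, with no extra factor of $w$. If the telescoping indices come out slightly differently, the term count may be $n$ or $n+1$. In that case I will absorb the extra term using the slack in the constant, for example by using the sharper estimate $(1+\varepsilon)H(\varepsilon/(1+\varepsilon)) + 2\varepsilon\log_2 d$ to leave room.
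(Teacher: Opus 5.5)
Your proof is correct and gives exactly the stated bound. It differs from the paper only in how you handle the boundary term $S(B_n)$.

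The paper applies the Fannes--Audenaert bound directly to the $w$-site block, which gives $|S_\rho(B_n)-S_\sigma(B_n)|\le \varepsilon\log_2(e d^w/\varepsilon)$. It then uses the fact that the bulk sum in Definition~\ref{def:markov_entropy} has only $n-w$ terms. The $w$ "missing" copies of $\varepsilon\log_2(4d^2/\varepsilon)$ more than pay for the block estimate, because $w\log_2(4d/\varepsilon)\ge\log_2(e/\varepsilon)$. The paper then checks that the leftover remainder, $\log_2(e\varepsilon^{w-1}/(4d)^w)$, is negative.

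So your stated reason for avoiding the direct bound is mistaken: the factor $w\log_2 d$ does fit. The direct block estimate is in fact at least as sharp as your sum of $w$ single-site Alicki--Fannes--Winter bounds; the difference is $(w-1)\varepsilon\log_2(4d/\varepsilon)\ge 0$ in its favour.

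What your chain-rule decomposition buys is uniformity. It rewrites $S^M$ as exactly $(n-w)+w=n$ single-site conditional entropies, each supported in a window of $w+1$ consecutive sites. The lemma then follows by bounding every term identically, with no leftover term to control. Since the count is exactly $n$, the fallback in your last paragraph is not needed.
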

\begin{proof}
    We let $B_i = \{ i-w+1, i+2, \ldots, i\}$. By following the same steps as in the proof of Lemma~\ref{lem:cmi_continuity}, we find that
    \begin{align}
        |S_\rho(i-w \mid B_i) - S_\sigma(i-w \mid B_i)| &\leq \varepsilon \log_2 (4 d^2 / \varepsilon) .
    \end{align}
    The von Neumann entropy satisfies the similar continuity bound~\cite{winter2016tight}
    \begin{equation}
        |S_\rho(B_{n}) - S_\sigma(B_{n})| \leq w \log_2 (d) \, \varepsilon + H(\varepsilon),
    \end{equation}
    which we simplify using $H(x) \leq - x \log_2 (x / e)$ to obtain
    \begin{align}
        |S_\rho(B_{n}) - S_\sigma(B_{n})| &\leq \varepsilon \log_2(e d^w / \varepsilon)
    \end{align}
    By expanding $|S_\rho^M - S_\sigma^M|$ in terms of Definition~\ref{def:markov_entropy} and applying the triangle inequality, we obtain
    \begin{equation}
        \left| S_\rho^M - S_\sigma^M \right| \leq \varepsilon [(n - w) \log_2(4 d^2 / \varepsilon) + \log_2(e d^w / \varepsilon) ] \leq 4 n \varepsilon \log_2(d / \varepsilon) + \log_2 (e \varepsilon^{w-1} / (4 d)^w ),
    \end{equation}
    where the final inequality holds for all $d \geq 2$ and $\varepsilon \leq 1$. The second term is negative for all $w \geq 1$ and can therefore be dropped.
\end{proof}

In the next step, we establish that the trace distance between two quantum states is small if the trace distance between their local marginals are small in addition to a condition on the CMI.
\begin{lemma}
    \label{lem:distance}
    Consider two quantum states $\rho \in \mathcal{D}(\mathcal{H}^{(n)})$ and $\sigma \in \mathcal{D}(\mathcal{H}^{(n)})$. Given a shielding width $w \geq 1$, we let $B_i = \{i-w+1, \ldots, i \}$ and $\bar{C}_i = \{i+1, \ldots, n\}$. Assuming that the reduced density matrices satisfy $\tfrac12 \| \rho(i-w, \ldots, i) - \sigma(i-w, \ldots, i) \|_1 \leq \varepsilon \leq 1$ for all $w+1 \leq i \leq n$, the trace distance between $\rho$ and $\sigma$ is bounded by
    \begin{equation}
        \|\rho-\sigma\|_1^2 \le 4\ln 2 \left[ R_\rho + R_\sigma + 8 n \varepsilon \log_2(d / \varepsilon) \right],
    \label{eq:m-final}
    \end{equation}
    where
    \begin{equation}
        R_\rho := \sum_{i=w+1}^{n-1} I_{\rho}(i-w : \bar{C}_i \mid B_i)
    \end{equation}
    and similarly for $\sigma$.
\end{lemma}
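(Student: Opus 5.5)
The plan is to use the quantum Pinsker inequality applied to the midpoint state $\tau = \tfrac12(\rho+\sigma)$, combined with the Markov entropy as a concave-like proxy for the von Neumann entropy. Recall that for any two states, the entropy concavity gap satisfies
\begin{equation}
    S_\tau - \tfrac12 S_\rho - \tfrac12 S_\sigma = \tfrac12 D(\rho\,\|\,\tau) + \tfrac12 D(\sigma\,\|\,\tau) \geq \frac{1}{4\ln 2}\left(\|\rho-\tau\|_1^2 + \|\sigma-\tau\|_1^2\right) = \frac{1}{8\ln 2}\|\rho-\sigma\|_1^2 ,
\end{equation}
using Pinsker $D(\rho\|\tau)\ge \|\rho-\tau\|_1^2/(2\ln 2)$ and $\rho-\tau = \tfrac12(\rho-\sigma)$. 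So it suffices to show that the concavity gap is at most $\tfrac12[R_\rho+R_\sigma+8n\varepsilon\log_2(d/\varepsilon)]$; the constant $4\ln 2$ then comes out exactly.

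To upper bound $S_\tau$, I would use Lemma~\ref{lem:markov_cmi} for $\tau$ together with nonnegativity of the CMI (strong subadditivity): $S_\tau \le S_\tau^M$. For $\rho$ and $\sigma$, Lemma~\ref{lem:markov_cmi} gives exactly $S_\rho = S^M_\rho - R_\rho$ and $S_\sigma = S_\sigma^M - R_\sigma$. Hence
\begin{equation}
    S_\tau - \tfrac12 S_\rho - \tfrac12 S_\sigma \le S^M_\tau - \tfrac12 S^M_\rho - \tfrac12 S^M_\sigma + \tfrac12 (R_\rho + R_\sigma).
\end{equation}
The remaining task is to control the Markov-entropy difference. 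Since $S^M$ depends only on the $(w+1)$-site marginals, and the marginals of $\tau$ are averages of those of $\rho$ and $\sigma$, we have $\tfrac12\|\tau(i-w,\ldots,i)-\rho(i-w,\ldots,i)\|_1 \le \varepsilon/2$ and likewise for $\sigma$. Applying Lemma~\ref{lem:markov_continuity} to the pairs $(\tau,\rho)$ and $(\tau,\sigma)$ with parameter $\varepsilon$ (valid since $\varepsilon/2\le\varepsilon$ and the bound $4n\varepsilon'\log_2(d/\varepsilon')$ is monotone for the relevant range; alternatively just use $\varepsilon$ directly as an upper bound on the marginal distance) gives $|S^M_\tau - S^M_\rho|,|S^M_\tau - S^M_\sigma| \le 4n\varepsilon\log_2(d/\varepsilon)$, so the Markov-entropy term is at most $4n\varepsilon\log_2(d/\varepsilon)$, which equals $\tfrac12\cdot 8n\varepsilon\log_2(d/\varepsilon)$ as required.

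The main point to check is the bookkeeping of the continuity step: Lemma~\ref{lem:markov_continuity} is stated with hypothesis ``distance $\le \varepsilon$'', so applying it to $(\tau,\rho)$ with the same $\varepsilon$ is immediate since the actual distance is $\le\varepsilon/2\le\varepsilon$; no monotonicity argument is needed. One could even exploit the halving to improve constants, but it is unnecessary. The other subtle point is the sign in $S_\tau\le S_\tau^M$, which relies only on every CMI term in Lemma~\ref{lem:markov_cmi} being nonnegative; this is where the convexity argument wins, since we never need to know anything about the CMI of the mixture $\tau$. Combining the displays yields $\|\rho-\sigma\|_1^2 \le 8\ln 2\cdot\tfrac12[R_\rho+R_\sigma+8n\varepsilon\log_2(d/\varepsilon)]$, i.e.\ Eq.~\eqref{eq:m-final}.
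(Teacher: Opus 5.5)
Your proposal is correct and matches the paper's proof step for step. You bound $\|\rho-\sigma\|_1^2\le 8\ln 2\,\Delta$ via the concavity gap at $\tau=\tfrac12(\rho+\sigma)$ and Pinsker. You then reduce to Lemma~\ref{lem:markov_continuity} using $S_\tau\le S^M_\tau$ together with $S_\rho=S^M_\rho-R_\rho$ and $S_\sigma=S^M_\sigma-R_\sigma$ from Lemma~\ref{lem:markov_cmi}.

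The only difference is cosmetic. You apply Lemma~\ref{lem:markov_continuity} with $\varepsilon$ itself, which is legitimate because the marginals of $\tau$ are within $\varepsilon/2$. As you conclude, no monotonicity of $x\log_2(d/x)$ is needed, which is just as well since it fails for $x>2/e$ when $d=2$. The paper instead plugs in $\varepsilon/2$ and uses $2n\varepsilon\log_2(2d/\varepsilon)\le 4n\varepsilon\log_2(d/\varepsilon)$; both routes give the same constant.
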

\begin{proof}
    We follow Ref.~\cite{kim2014informational} to define the concavity gap of the von Neumann entropy~\cite{kim2014bounds},
    \begin{equation}
        \label{concavity_gap}
        \Delta := S(\tau) - \tfrac12S(\rho) - \tfrac12S(\sigma),
    \end{equation}
    where $\tau := \tfrac12(\rho + \sigma)$. 
    The concavity gap can be expressed as
    \begin{equation}\label{eq:Delta-RelEnt}
    \Delta = \tfrac12 D(\rho\|\tau) + \tfrac12 D(\sigma\|\tau),
    \end{equation}
    where $D(\omega\|\eta) = \mathrm{Tr}[\omega(\log_2 \omega - \log_2 \eta)]$ denotes the relative entropy of $\omega$ with respect to $\eta$.
    To make $D(\omega\|\eta)$ finite, we must have $\mathrm{supp}(\omega)\subseteq\mathrm{supp}(\eta)$, which is automatically satisfied for $D(\rho\|\tau)$ and $D(\sigma\|\tau)$ because $\rho$ and $\sigma$ are positive semi-definite and thus
    $\mathrm{supp}(\rho)\subseteq\mathrm{supp}\bigl(\rho+\sigma\bigr)$ and
    $\mathrm{supp}(\sigma)\subseteq\mathrm{supp}\bigl(\rho+\sigma\bigr)$.
    By the quantum Pinsker inequality~\cite{audenaert2005continuity}, for any two states $\omega$ and $\eta$,
    \begin{equation}
        D(\omega \| \eta) \geq \frac{1}{2 \ln 2} \| \omega - \eta \|_1^2
    \end{equation}
    such that
    \begin{equation}\label{eq:Pinsker}
        \|\rho - \sigma\|_1^2 \leq 8\ln 2 \, \Delta .
    \end{equation}
    
    To establish an upper bound on $\Delta$, we relate it to the Markov entropy and CMI. By Lemma~\ref{lem:markov_cmi}, $R_\omega = S^M_\omega - S_\omega$ such that
    \begin{equation}\label{eq:Delta-decomp-final}
        \Delta = \frac12 \left( S_\tau - S_\rho^M \right) + \frac12 \left(S_\tau - S_\sigma^M \right) + \frac{1}{2} (R_\rho + R_\sigma) .
    \end{equation} 
    Moreover, since the CMI is nonnegative, Lemma~\ref{lem:markov_cmi} implies that $S_\omega^M \geq S_\omega$. It follows that
    \begin{equation}
        \label{eq:delta_bound}
        \Delta \leq \frac12 \left( S_\tau^M- S_\rho^M \right) + \frac12 \left(S_\tau^M - S_\sigma^M \right) + \frac{1}{2} (R_\rho + R_\sigma) .
    \end{equation} 
    By noting that
    \begin{equation}
        \| \tau - \rho \|_1 = \| \tau - \sigma \|_1 = \frac{1}{2} \| \rho - \sigma \|_1,
    \end{equation}
    we obtain from Lemma~\ref{lem:markov_continuity}
    \begin{equation}
        | S_\tau^M - S_\rho^M | \leq 2 n \varepsilon \log_2(2 d / \varepsilon) \leq 4 n \varepsilon \log_2(d / \varepsilon)
    \end{equation}
    and similarly for $ |S_\tau^M - S_\sigma^M|$. The lemma follows by substituting this bound into Eq.~\eqref{eq:delta_bound} and Eq.~\eqref{eq:Pinsker}.
\end{proof}

Applied to to our problem, we choose $\rho$ to be the physical state with exponentially decaying CMI, which immediately implies that $R_\rho$ is exponentially suppressed in the size $w$ of the Markov shield. We further let $\sigma = \sigma_n$ to be sequentially generated according to Definition~\ref{def:sequentially}. The next lemma relates the properties of $\sigma_n$ pertinent to Lemma~\ref{lem:distance} to the properties of previously generated states.
\begin{lemma}
    \label{lem:r_continuity}
    Consider a state $\rho \in \mathcal{D}(\mathcal{H}^{(n)})$ and a set of sequentially generated states $\sigma_i \in \mathcal{D}(\mathcal{H}^{(n)}(1, 2, \ldots, i))$ with reconstruction width $w \geq 1$ according to Definition~\ref{def:sequentially}. Assuming that
    \begin{equation}
        \frac{1}{2} \big\| \rho(i-2w, \ldots, i) - \sigma_i(i-2w, \ldots, i) \big\|_1 \leq \varepsilon \leq 1
        \label{eq:assumption_delta}
    \end{equation}
    for all $2w + 1 \leq i \leq n$, the quantities $R_\rho$ and $R_{\sigma_n}$ defined in Lemma~\ref{lem:distance} satisfy
    \begin{equation}
        R_{\sigma_n} \leq R_\rho + 16 w n \varepsilon \log_2(d / \varepsilon) .
    \end{equation}
    Moreover, the final reconstructed state satisfies
    \begin{equation}
        \label{eq:distance_reconstructed}
        \frac{1}{2} \| \rho(i - w, \ldots, i) - \sigma_n(i - w, \ldots, i) \|_1 \leq \varepsilon.
    \end{equation}
    for all $w + 1 \leq i \leq n$.
\end{lemma}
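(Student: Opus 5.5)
The plan is to exploit one structural property of sequentially generated states. Every map $\mathcal{R}_j$ acts only on $B_j=\{j-w+1,\ldots,j\}$ and on the new site $j+1$, so the sites in $\bar{A}_j=\{1,\ldots,j-w\}$ are never touched again once $\sigma_j$ has been produced. Concretely, for $j\le n$,
\begin{equation*}
\sigma_n = \bigl(\mathcal{I}_{1,\ldots,j-w}\otimes \Phi_j\bigr)[\sigma_j], \qquad \Phi_j := \mathcal{R}_{n-1}\circ\cdots\circ\mathcal{R}_j ,
\end{equation*}
where the composite CPTP map $\Phi_j$ acts only on sites $\geq j-w+1$. Note that the recovery maps may modify $B_j$ itself, so in general $\sigma_n(1,\ldots,i)\neq\sigma_i$. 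This is why I compare with the state $\sigma_{i+w}$ rather than $\sigma_i$.

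\emph{Second claim (Eq.~\eqref{eq:distance_reconstructed}).} Fix $w+1\le i\le n$.
\begin{itemize}
\item If $i+w\le n$, take $j=i+w$. The sites $\{i-w,\ldots,i\}$ lie in $\{1,\ldots,j-w\}$, so $\sigma_n(i-w,\ldots,i)=\sigma_{i+w}(i-w,\ldots,i)$.
\item The window $\{i-w,\ldots,i\}$ is contained in $X_{i+w}=\{i-w,\ldots,i+w\}$. Monotonicity of the trace distance under partial trace, together with assumption~\eqref{eq:assumption_delta} at index $i+w$, then gives the bound $\varepsilon$.
\item If $i+w>n$, use $\sigma_n$ directly. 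Here $\{i-w,\ldots,i\}\subseteq\{n-2w,\ldots,n\}=X_n$, and assumption~\eqref{eq:assumption_delta} at index $n$ applies.
\end{itemize}

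\emph{First claim.} Bound each term $I_{\sigma_n}(i-w:\bar C_i\mid B_i)$ separately.
\begin{itemize}
\item For $i+w\le n$, the representation above with $j=i+w$ shows that $\sigma_n$ arises from $\sigma_{i+w}$ by a CPTP map acting only on sites $\geq i+1$. It also appends sites $i+w+1,\ldots,n$, which is a channel from $\{i+1,\ldots,i+w\}$ to $\bar C_i$.
\item The CMI $I(A:C\mid B)$ is non-increasing under channels acting on $C$ alone; this follows from the data-processing inequality for $I(A:BC)$, since $I(A:B)$ is unchanged. Hence
\begin{equation*}
I_{\sigma_n}(i-w:\bar C_i\mid B_i)\le I_{\sigma_{i+w}}(i-w: i+1,\ldots,i+w\mid B_i).
\end{equation*}
\item The right-hand side depends only on the marginal of $\sigma_{i+w}$ on $X_{i+w}$, which is $\varepsilon$-close to $\rho(X_{i+w})$ by assumption. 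Lemma~\ref{lem:cmi_continuity} with $d_A=d$ and $d_C=d^w$ gives
\begin{equation*}
I_{\sigma_{i+w}}(\cdots)\le I_\rho(i-w:i+1,\ldots,i+w\mid B_i)+8\varepsilon\log_2(d^{w+1}/\varepsilon).
\end{equation*}
Since $w+1\le 2w$ and $\log_2 d\le\log_2(d/\varepsilon)$, the error term is at most $16w\varepsilon\log_2(d/\varepsilon)$.
\item The same monotonicity applied to $\rho$ (tracing out part of $\bar C_i$) gives $I_\rho(i-w:i+1,\ldots,i+w\mid B_i)\le I_\rho(i-w:\bar C_i\mid B_i)$.
\item For $i+w>n$, all of $\{i-w\}\cup B_i\cup\bar C_i$ lies in $X_n$. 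Lemma~\ref{lem:cmi_continuity} then applies directly to $\sigma_n(X_n)$, whose $C$-dimension is at most $d^w$, giving the same error term.
\end{itemize}
Summing over the fewer than $n$ values of $i$ yields $R_{\sigma_n}\le R_\rho+16wn\varepsilon\log_2(d/\varepsilon)$.

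The main obstacle is the observation that the recovery maps may alter the conditioning region $B_i$. This rules out the naive comparison with $\sigma_i$ and is what forces the detour through $\sigma_{i+w}$, together with data processing on the $C$ side only. Beyond that, the remaining work is bookkeeping of the boundary cases $i+w>n$ and of the constants in Lemma~\ref{lem:cmi_continuity}.
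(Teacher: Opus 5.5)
Your proposal is correct and follows the paper's proof essentially step by step. Both arguments pass through $\sigma_{i+w}$, use data processing on the $C$ side, apply Lemma~\ref{lem:cmi_continuity} with $d_A=d$, $d_C=d^w$, use data processing for $\rho$, and for the second claim use monotonicity of the trace distance under partial trace. Your explicit handling of the boundary terms with $i+w>n$, done directly via $\sigma_n(X_n)$, is a small improvement: the paper's proof invokes $\sigma_{i+w}$ for all $i\le n-1$ without noting that this state is undefined when $i+w>n$.
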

\begin{proof}
     Let $B_i = \{i-w+1, \ldots, i\}$ and $\bar{C}_i = \{ i+1, \ldots, n\}$ and define $\bar{A}_i = \{ 1, 2, \ldots, i-w \}$. Using Definition~\ref{def:sequentially}, we may express the sequentially generated state as $\sigma_n = \Phi_i[\sigma_i]$, where $\Phi_i = (\mathcal{I}_{\bar{A}_{n-1}} \otimes \mathcal{R}_{n-1}) \circ \cdots \circ (\mathcal{I}_{\bar{A}_{i}} \otimes \mathcal{R}_i)$. Here, $\Phi_i$ is a CPTP map that extends subsystem $B_{i}$ to $B_{i} \cup \bar{C}_{i}$. Hence, $I_{\sigma_n}(i-w : \bar{C}_i \mid B_i) = I_{\Phi_{i+w}[\sigma_{i+w}]}(i-w : B_{i+w} \cup \bar{C}_{i+w} \mid B_i)$. Since $\Phi_{i+w}$ does not act on $B_i$, we may apply the data processing inequality for the CMI~\cite{lindblad1975completely,beaudry2011intuitive} to obtain $I_{\sigma_n}(i-w : \bar{C}_i \mid B_i) \leq I_{\sigma_{i+w}}(i-w : B_{i+w} \mid B_i)$ and
    \begin{equation}
        R_{\sigma_{n}} \leq \sum_{i=w+1}^{n-1} I_{\sigma_{i+ w}}(i-w : B_{i+w}  \mid B_i) .
    \end{equation}
    By the continuity of the CMI, Lemma~\ref{lem:cmi_continuity}, together with the assumption in Eq.~\eqref{eq:assumption_delta},
    \begin{align}
        I_{\sigma_{i+w}}(i-w : B_{i+w}  \mid B_i) &\leq I_\rho(i-w : B_{i+w} \mid B_i) + 8 \varepsilon \log_2(d^{w+1} / \varepsilon).
    \end{align}
    Since $\rho(\{i-w\} \cup B_i \cup B_{i+w}) = \tr_{\bar{C}_{i+w}} [\rho(\{i-w\} \cup B_i \cup \bar{C}_i)]$, the data processing inequality implies that $I_\rho(i -w : B_{i+w} \mid B_i) \leq I_\rho(i - w : \bar{C}_i \mid B_i)$. We thus obtain
    \begin{equation}
        R_{\sigma_{n}} \leq R_\rho + 8 (n-w-1) \varepsilon \log_2(d^{w+1} / \varepsilon),
    \end{equation}
    which can be simplified to the first statement in the lemma for $w \geq 1$.

    For the second part of the lemma, we first note that
    \begin{equation}
        \frac{1}{2} \big\| \rho(n-2w, \ldots, n) - \sigma_n(n-2w, \ldots, n) \big\|_1 \leq \varepsilon 
    \end{equation}
    by assumption. Since the trace distance is non-increasing under the partial trace, Eq.~\eqref{eq:distance_reconstructed} thus holds for all $n-w \leq i \leq n$. To establish the result for $i < n - w$, we observe that the map $\mathcal{R}_{i+w}$ only acts on sites with index greater than $i$. Because the maps are trace preserving, this implies $\sigma_n(i-w, \ldots, i) = \sigma_{i+w}(i-w, \ldots, i)$. Hence,
    \begin{align}
        \frac{1}{2} \| \rho(i - w, \ldots, i) &- \sigma_n(i - w, \ldots, i) \|_1  = \frac{1}{2} \| \rho(i - w, \ldots, i) - \sigma_{i+w}(i - w, \ldots, i) \|_1 \\
        &\leq \frac{1}{2} \| \rho(i - w, \ldots, i + w) - \sigma_{i+w}(i - w, \ldots, i + w) \|_1 \leq \varepsilon .
    \end{align}
\end{proof}

The above lemma applies to all sequentially generated states. However, it is not immediately useful as it depends on the trace distances between reduced density matrices of $\rho$ and the sequentially generated states, which are generally unknown. We will now show that for locally optimal recovery maps according to Definition~\ref{def:optimal}, the trace distances can be bounded in terms of the CMI with respect to $\rho$ and the local tomography uncertainty.
\begin{proposition}
    \label{prop:reconstruction}
    Consider a quantum state $\rho \in \mathcal{D}(\mathcal{H}^{(n)})$ with exponentially decaying CMI according to Definition~\ref{def:exp_cmi}. Let $\hat{\rho}_i$ be an estimate of the reduced density operator $\rho(i-2w, \ldots, i)$ satisfying
    \begin{equation}
        \frac{1}{2} \big\| \rho(i-2w, \ldots, i)- \hat{\rho}_{i} \big\|_1 \leq \eta
    \end{equation}
    for all $2w+1 \leq i \leq n$. Further, let $\sigma_i \in \mathcal{D}(\mathcal{H}^{(n)}(1, 2, \ldots, i))$ denote the sequentially generated states that are locally optimal with respect to $\hat{\rho}_i$ (Definition~\ref{def:optimal}). Then, for all $2w+1 \leq i \leq n$,
    \begin{equation}
        \frac{1}{2} \big\| \rho(i-2w, \ldots, i)- \sigma_i(i-2w, \ldots, i) \big\|_1 \leq \left(2 \eta + a \, e^{-w / 2 \xi} \right) n.
    \end{equation}
\end{proposition}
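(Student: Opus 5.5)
We argue by induction on $i$, tracking the local mismatch $\delta_i := \tfrac12 \| \rho(X_i) - \sigma_i(X_i) \|_1$ with $X_i = \{i-2w, \ldots, i\}$. The target is a recursion of the form $\delta_{i+1} \le \delta_i + 2\eta + a\,e^{-w/2\xi}$, which summed over at most $n$ steps gives the claimed bound $(2\eta + a e^{-w/2\xi})n$. The base case $i = 2w+1$ is immediate: $\sigma_{2w+1} = \hat\rho_{2w+1}$, so $\delta_{2w+1} \le \eta$. (If $\hat\rho_{2w+1}$ is not a valid state, we first project it onto states, at the cost of at most a factor two in $\eta$.)

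For the inductive step we compare the locally optimal map $\mathcal{R}_i^\star$ with a reference map $\mathcal{R}_i^{\rm rot}$, the rotated Petz map of Ref.~\cite{JungeUniversal2018} built from the \emph{true} marginal $\rho(B_i C_i)$. Write $\tau_i := \tr_{L_i}(\sigma_i)$, a state on $A_i B_i$. The triangle inequality and local optimality of $\mathcal{R}_i^\star$ give
\begin{align}
\|\rho(X_{i+1}) - (\mathcal{I}_{A_i}\otimes\mathcal{R}_i^\star)[\tau_i]\|_1 &\le \eta' + \|\hat\rho_{i+1} - (\mathcal{I}_{A_i}\otimes\mathcal{R}_i^\star)[\tau_i]\|_1 \\
&\le 2\eta' + \|\rho(X_{i+1}) - (\mathcal{I}_{A_i}\otimes\mathcal{R}_i^{\rm rot})[\tau_i]\|_1 ,
\end{align}
where $\eta' = 2\eta$ is the untraced-norm tomography error. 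For the remaining term we insert $\rho(A_iB_i)$ and use contractivity of CPTP maps:
\begin{equation}
\|\rho(X_{i+1}) - (\mathcal{I}\otimes\mathcal{R}_i^{\rm rot})[\tau_i]\|_1 \le \|\rho(X_{i+1}) - (\mathcal{I}\otimes\mathcal{R}_i^{\rm rot})[\rho(A_iB_i)]\|_1 + \|\rho(A_iB_i) - \tau_i\|_1 .
\end{equation}
The last term is at most $2\delta_i$, because $A_iB_i \subseteq X_i$. Indeed, $A_i B_i = \{i-2w+1,\ldots,i\}$, and $\tau_i$ is exactly the restriction of $\sigma_i$ to these sites.

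The first term is the Petz recovery error. By the Fawzi--Renner / Junge et al.\ bound, $-2\log_2 F \le I$, combined with Fuchs--van de Graaf, it is at most $2\sqrt{1-2^{-I}} \le 2\sqrt{I\ln 2}$ with $I = I_\rho(A_i : C_i \mid B_i)$. Since $|B_i| = w$, Definition~\ref{def:exp_cmi} gives $I \le a e^{-w/\xi}$, so this term is $O(\sqrt{a}\,e^{-w/2\xi})$. This square root is precisely the origin of the $e^{-w/2\xi}$ in the statement. Matching the exact prefactor $a$ (rather than $2\sqrt{a\ln2}$) requires adjusting constants, for instance assuming $a \gtrsim 1$ or absorbing the constant into $a$. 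I expect this constant bookkeeping to be minor.

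Finally, $X_{i+1} = A_iB_iC_i$ and $\sigma_{i+1}(X_{i+1}) = (\mathcal{I}_{A_i}\otimes\mathcal{R}_i^\star)[\tau_i]$, because $\mathcal{R}_i^\star$ acts only on $B_i$ and the partial trace over $L_i$ commutes with it. Halving the norms yields $\delta_{i+1} \le \delta_i + 2\eta + a e^{-w/2\xi}$, and iterating over at most $n$ steps closes the proof.

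The main obstacle is ensuring the Petz comparison uses the \emph{true} reference marginal $\rho(A_iB_i)$ rather than the reconstructed one. This is handled by contractivity as above, so no stability of the Petz map against perturbations of its defining state is needed. That is exactly why local optimality lets us bypass the shot-noise-stable Petz analysis of Ref.~\cite{ScaletClassical2025}.
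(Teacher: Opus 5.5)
Your proposal is correct and follows essentially the same route as the paper. The paper also bounds the locally optimal map's error by that of the rotated Petz map built from the true marginal $\rho(B_iC_i)$, uses the triangle inequality and contractivity to obtain $\delta_{i+1}\le \delta_i + 2\eta + \sqrt{I_\rho(A_i:C_i\mid B_i)}$, and sums from the base case $\delta_{2w+1}\le\eta$. The constant bookkeeping you flag is present in the paper as well: it silently replaces $\sqrt{a}$ by $a$, which implicitly assumes $a\ge 1$.
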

\begin{proof}
    Let $A_i = \{i-2w+1, \ldots, i-w\}$, $B_i = \{i-w+1, \ldots, i\}$, and $C_i = \{i+1\}$. Consider the trace distance 
    \begin{align}
        \label{eq:td_rhohat}
        \| \hat{\rho}_{i+1} - \sigma_{i+1}(i-2w, \ldots, i) \|_1 = \| \hat{\rho}_{i+1} - \sigma_{i+1}(A_{i} B_{i} C_{i}) \|_1 = \| \hat{\rho}_{i+1} - ( \mathcal{I}_{A_{i}} \otimes \mathcal{R}_{i}) [\sigma_{i}(A_{i} B_{i})] \|_1,
    \end{align}
    where $\mathcal{R}_{i} \in \mathcal{S}(\mathcal{H}^{(n)}(B_{i}), \mathcal{H}^{(n)}(B_{i} C_{i}))$ is the locally optimal recovery map. By definition, $\mathcal{R}_{i}$ minimizes this trace distance and we may use any other recovery map to obtain an upper bound. In particular, we will consider the rotated Petz recovery map (switching briefly to index $j$ to distinguish it from the imaginary unit in the exponent)~\cite{JungeUniversal2018}
    \begin{equation}
        \mathcal{R}_{j}^\text{Petz}[\cdot] = \frac{\pi}{2} \int_{-\infty}^\infty \mathrm{d} t \, \frac{1}{\cosh(\pi t) + 1} \rho(B_j C_j)^{1/2-it} \rho(B_j)^{-1/2+it} \, [\cdot] \, \rho(B_j)^{-1/2-it} \rho(B_j C_j)^{1/2+it}.
    \end{equation}
    The rotated Petz recovery map has the important property (see Ref.~\cite{JungeUniversal2018}, Theorem 2.1)
    \begin{equation}
        \label{eq:approx_markov}
        \frac{1}{2} \left\| \rho(A_i B_i C_i) - \mathcal{R}_{i}^\mathrm{Petz}[\rho(A_i B_i)] \right\|_1 \leq \sqrt{I_\rho(A_i : C_i \mid B_i)} .
    \end{equation}
    By using the rotated Petz recovery map to bound Eq.~\eqref{eq:td_rhohat}, we obtain
    \begin{align}
        \| \hat{\rho}_{i+1} - \sigma_{i+1}(A_i B_i C_i) \|_1 
        &\leq \| \hat{\rho}_{i+1} - ( \mathcal{I}_{A_i} \otimes \mathcal{R}_{i}^\text{Petz}) [\sigma_{i}(A_i B_i)] \|_1\\
        &\leq \| \hat{\rho}_{i+1} - \rho(A_i B_i C_i) \|_1 + \| \rho(A_i B_i C_i) - ( \mathcal{I}_{A_i} \otimes \mathcal{R}_{i}^\text{Petz}) [\rho(A_i B_i)]\|_1 \nonumber\\
        &\hspace{3cm}+ \|( \mathcal{I}_{A_i} \otimes \mathcal{R}_{i}^\text{Petz}) [\rho(A_i B_i) - \sigma_{i}(A_i B_i)] \|_1,
    \end{align}
    where we used the triangle inequality. The first term on the right-hand side is bounded by $2 \eta$ by assumption while the second term is bounded by Eq.~\eqref{eq:approx_markov}. For the final term, we use the fact that the trace distance is non-increasing under CPTP maps, yielding
    \begin{align}
        \frac{1}{2} \| \hat{\rho}_{i+1} - \sigma_{i+1}(A_i B_i C_i) \|_1 &\leq \eta + \sqrt{I_\rho(A_i : C_i \mid B_i)} + \frac{1}{2} \| \rho(A_i B_i) - \sigma_{i}(A_i B_i) \|_1 \\
        &\leq \eta + \sqrt{I_\rho(A_i : C_i \mid B_i)} + \frac{1}{2} \| \rho(A_{i-1} B_{i-1} C_{i-1}) - \sigma_{i}(A_{i-1} B_{i-1} C_{i-1}) \|_1. \nonumber
    \end{align}
    In the second line, we used the relation $\{i-2w\} \cup A_i \cup B_i = A_{i-1} \cup B_{i-1} \cup C_{i-1}$ along with the fact that the trace distance is non-increasing under the partial trace. Defining $\delta_{i+1} := \tfrac{1}{2} \| \rho(A_i B_i C_i) - \sigma_{i+1}(A_i B_i C_i) \|_1$, we thus find
    \begin{align}
         \delta_{i+1} \leq \eta + \frac{1}{2} \| \hat{\rho}_{i+1} - \sigma_{i+1}(A_i B_i C_i) \|_1 \leq 2 \eta + \sqrt{I_\rho(A_i : C_i \mid B_i)} + \delta_{i}.
    \end{align}
    By iterating this inequality until reaching the boundary condition $\delta_{2w + 1} \leq \eta$, we obtain
    \begin{equation}
        \delta_{i+1} \leq 2 (i - 2 w) \eta + \sum_{j=2w+1}^i \sqrt{I_\rho(A_j : C_j \mid B_j)} + \eta.
    \end{equation}
    Using the fact that the CMI of $\rho$ decays exponentially, we finally arrive at
    \begin{equation}
        \max_{2w+1 \leq i \leq n} \delta_i \leq \left(2 \eta + a \, e^{-w / 2 \xi} \right) n.
    \end{equation}
\end{proof}

Finally, we combine all results into our main theorem.
\begin{theorem}
    \label{thm:distance}
    Consider a quantum state $\rho \in \mathcal{D}(\mathcal{H}^{(n)})$ with exponentially decaying CMI according to Definition~\ref{def:exp_cmi}. Let $\hat{\rho}_i$ be an estimate of the reduced density operator $\rho(i-2w, \ldots, i)$ satisfying
    \begin{equation}
        \frac{1}{2} \big\| \rho(i-2w, \ldots, i)- \hat{\rho}_{i} \big\|_1 \leq \eta
    \end{equation}
    for all $2w+1 \leq i \leq n$. Further, let $\sigma_i \in \mathcal{D}(\mathcal{H}^{(n)}(1, 2, \ldots, i))$ denote the sequentially generated states that are locally optimal with respect to $\hat{\rho}_i$ (Definition~\ref{def:optimal}). There exist constants $b$ and $n_0$ such that
    \begin{equation}
        \label{eq:m-final-simplified}
        \|\rho - \sigma_n \|_1 \le \varepsilon.
    \end{equation}
    for any $\varepsilon \leq 1$ if $w \geq 4 \xi \ln[(b n / \varepsilon) \ln (b n / \varepsilon)]$, $\eta \leq (a/2) e^{-w / 2 \xi}$, and $n \geq n_0$.
\end{theorem}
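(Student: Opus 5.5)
Following the proof sketch in the main text, the argument chains Proposition~\ref{prop:reconstruction}, Lemma~\ref{lem:r_continuity}, and Lemma~\ref{lem:distance}. First, with $\eta \le (a/2) e^{-w/2\xi}$, Proposition~\ref{prop:reconstruction} gives, for all $2w+1 \le i \le n$,
\begin{equation*}
    \tfrac12 \| \rho(i-2w,\ldots,i) - \sigma_i(i-2w,\ldots,i)\|_1 \le \bigl(2\eta + a e^{-w/2\xi}\bigr) n \le 2 a n e^{-w/2\xi} =: \varepsilon_{\mathrm{loc}}.
\end{equation*}
We will check that $\varepsilon_{\mathrm{loc}} \le 1$ once $w$ satisfies the assumed lower bound. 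This holds because $e^{-w/2\xi} \le [(\varepsilon/bn)/\ln(bn/\varepsilon)]^2$, which is tiny for $b$ large. Lemma~\ref{lem:r_continuity} then applies with $\varepsilon = \varepsilon_{\mathrm{loc}}$ and yields two facts: $R_{\sigma_n} \le R_\rho + 16 w n \varepsilon_{\mathrm{loc}} \log_2(d/\varepsilon_{\mathrm{loc}})$, and the $(w+1)$-site marginals of $\sigma_n$ are $\varepsilon_{\mathrm{loc}}$-close to those of $\rho$. These are exactly the hypotheses of Lemma~\ref{lem:distance} with $\sigma = \sigma_n$.

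Next, Lemma~\ref{lem:distance} gives
\begin{equation*}
    \|\rho - \sigma_n\|_1^2 \le 4\ln 2\bigl[2R_\rho + (16w + 8) n\, \varepsilon_{\mathrm{loc}} \log_2(d/\varepsilon_{\mathrm{loc}})\bigr].
\end{equation*}
To bound $R_\rho$ we use Definition~\ref{def:exp_cmi} with $|B_i| = w$, which gives $R_\rho \le a n e^{-w/\xi}$. Both terms are therefore at most of order $a\, w\, n^2 e^{-w/2\xi}\,[\ln(n) + w/\xi]$ up to $d$-dependent constants. Here we write $\log(1/\varepsilon_{\mathrm{loc}}) \approx w/2\xi + \ln(1/(2an))$; the additive logarithm of $n$ is absorbed for $n \ge n_0$.

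It then remains to show that the right-hand side is at most $\varepsilon^2$ under the assumed condition on $w$. Set $x = bn/\varepsilon$ and $w = 4\xi\ln(x\ln x)$, so that $e^{-w/2\xi} = (x \ln x)^{-2}$. The dominant term becomes $C\, n^2 w^2 \xi^{-1} (x\ln x)^{-2}$ with $w = O(\xi \ln x)$. This equals $C' \xi\, n^2 (\ln x)^2/(x^2 (\ln x)^2) = C' \xi\, \varepsilon^2 / b^2$. Choosing $b^2 \ge C'\xi$, or absorbing $\xi$ and $a$ into $b$, gives the bound $\varepsilon^2$. Since the left-hand side decreases in $w$, any larger $w$ also works.

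I expect the main obstacle to be this final bookkeeping. The factors of $w$ and $\log(1/\varepsilon_{\mathrm{loc}})$ grow with $w$ themselves, and the double logarithm $\ln(x\ln x)$ in the hypothesis is designed precisely to cancel them. One must track that $w\log_2(d/\varepsilon_{\mathrm{loc}}) = O(\xi(\ln x)^2)$ is exactly compensated by the $(\ln x)^{-2}$ factor. One must also handle the lower-order terms $\ln(1/a)$ and $\ln n$ by choosing $n_0$ large and letting $b$ depend on $a$, $\xi$, and $d$.
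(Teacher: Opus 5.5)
Your proposal is correct and is essentially the paper's proof: the chain Proposition~\ref{prop:reconstruction} $\to$ Lemma~\ref{lem:r_continuity} $\to$ Lemma~\ref{lem:distance} with local error $2an\,e^{-w/2\xi}$, the subdominant bound $R_\rho \le a n\, e^{-w/\xi}$, dropping $\log(d/2an)\le 0$ for $n \ge d/2a$, and the substitution $e^{-w/2\xi} = (x\ln x)^{-2}$ cancelling the $w^2$ factor (the paper obtains the explicit factor $[1+\ln\ln x/\ln x]^2 \le 2$ and sets $b^2 \propto a\xi$). One small correction: it is not the left-hand side $\|\rho-\sigma_n\|_1$ that is monotone in $w$ but the bound $w^2 e^{-w/2\xi}$, which decreases only for $w \ge 4\xi$; the hypothesis on $w$ guarantees this once $bn/\varepsilon \ge e$, which is why the paper includes $e/b$ in $n_0$.
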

\begin{proof}
    We will show that the theorem holds for $n_0 = \max \{d / 2 a, e / b, 2 a / b^2, \xi / 12 \}$.
    
    Given the assumption $\eta \leq (a/2) e^{-w / 2 \xi}$, Proposition~\ref{prop:reconstruction} gives the bound
    \begin{equation}
        \frac{1}{2} \big\| \rho(i-2w, \ldots, i)- \sigma_i(i-2w, \ldots, i) \big\|_1 \leq 2 a n \, e^{-w / 2 \xi} \leq \frac{2 a n}{(b n / \varepsilon)^2 \ln^2 (b n / \varepsilon)}.
    \end{equation}
    Since $\varepsilon \leq 1$ and $n \geq n_0$, the right-hand side is bounded from above by $1$, which allows us to use the bounds in Lemma~\ref{lem:r_continuity}. Explicitly,
    \begin{gather}
        R_{\sigma_n} \leq R_\rho + 32 a w n^2 [\log(d / 2 a n) + w / 2 \xi] e^{-w / 2 \xi} \leq R_\rho + (16 a w^2 n^2 / \xi) e^{-w / 2 \xi}, \\
        \frac{1}{2} \| \rho(i - w, \ldots, i) - \sigma_n(i - w, \ldots, i) \|_1 \leq 2 a n \, e^{-w / 2 \xi}.
    \end{gather}
    By substituting these bounds into Lemma~\ref{lem:distance}, we obtain
    \begin{align}
        \|\rho-\sigma\|_1^2 &\leq 8\ln 2 \left[ R_\rho + (8 a w^2 n^2 / \xi) \, e^{-w / 2 \xi} + 8 a n^2 (\log_2(d / 2 a n) + w / 2 \xi ) \, e^{-w / 2 \xi}\right] \\
        &\leq 8\ln 2 \left[ R_\rho + (12 a w^2 n^2 / \xi ) \, e^{-w / 2 \xi}\right],
        \label{eq:proof1}
    \end{align}
    where we used $n \geq d / 2 a$ and $w \geq 1$. The definition of $R_\rho$ (Lemma~\ref{lem:distance}) in conjunction with the exponentially decaying CMI assumption (Definition~\ref{def:exp_cmi}) yields $R_\rho \leq a n \, e^{-w / \xi}$. Since $n \geq \xi / 12$, the second term in the square brackets of Eq.~\eqref{eq:proof1} bounds $R_\rho$ from above. Therefore,
    \begin{align}
        \|\rho-\sigma\|_1^2 \leq 192 \ln 2 (a w^2 n^2 / \xi ) \, e^{-w / 2 \xi}.
        \label{eq:proof2}
    \end{align}
    The function $x^2 e^{-x}$ decreases monotonically for all $x \geq 2$. This implies that we may substitute a lower bound on $w$ into Eq.~\eqref{eq:proof2} provided that $w \geq 4 \xi$, which is indeed satisfied since $n \geq e / b$. We thus obtain
    \begin{align}
        \|\rho-\sigma\|_1^2 \leq 3072 \ln 2 (a \xi / b^2 ) \varepsilon^2 \left[ 1 + \frac{\ln \ln (b n / \varepsilon)]}{ \ln (b n / \varepsilon)} \right]^2.
    \end{align}
    The term in square brackets is bounded from above by $1 + 1/e \leq \sqrt{2}$. Setting $b^2 = 6144 \ln 2 \, a \, \xi$ thus establishes the theorem.
\end{proof}

\section{Sample complexity}
\label{Appendix:sample_complexity}
\subsection{Process tomography with classical shadows}
\label{app:access-reduced-channel}

The sample complexity of our algorithm is governed by the complexity of estimating the reduced channels $J_i = \tr_{[n] \setminus X_i} [J(\Lambda)]$ for the subsystems $X_i = \{i-2w, \ldots, i\}$, $i \in \{2w+1, \ldots, n\}$. For concreteness, we employ the ShadowQPT protocol of Ref.~\cite{levy2024classical}. Applied to our formalism, the gates $U_i$ are chosen such that they prepare with equal probability one of the 6 eigenstates of the Pauli operators $X$, $Y$, and $Z$. Similarly, the $V_i$ are drawn from a random distribution such that the measurement is performed in one of the Pauli bases with equal probability. Instead of specifying $U_i$ and $V_i$, we may equivalently describe the input state $\ket{\mu_i^\mathrm{in}, b_i^\mathrm{in}}$ and the measured state $\ket{\mu_i^\mathrm{out}, b_i^\mathrm{out}}$, where $\mu_i^\mathrm{in,out} \in \{X, Y, Z\}$ specifies the Pauli basis and $b_i^\mathrm{in,out} \in \{-1, +1\}$ the eigenvalue associated with the prepared or measured state, respectively.

For a given shot, it is straightforward to restrict the record to a subsystem of interest $A_i$ by only retaining $(\mu_j^\mathrm{in}, b_j^\mathrm{in}, \mu_j^\mathrm{out}, b_j^\mathrm{out})$ for $j \in A_i$. Denoting the such a record for a given subsystem by $r$, we define the classical shadow
\begin{equation}
  \hat{\zeta}_i(r) := \bigotimes_{j\in A_i} \left[ \tau(\mu_j^\mathrm{in}, b_j^\mathrm{in})^\mathsf{T} \otimes \tau(\mu_j^\mathrm{out}, b_j^\mathrm{out}) \right] ,
  \label{eq:single-shot-zeta-W}
\end{equation}
where
\begin{equation}
  \tau(\mu, b) := 3 \ket{\mu, b} \bra{\mu, b} - \mathbb{I}
  \label{eq:tau-def}
\end{equation}
and $(\cdot)^{\mathsf T}$ denotes transpose in the computational basis. The classical shadow is an unbiased estimator for the reduced Choi state $J_i$: $\mathbb{E}[\hat{\zeta}_i(r)] = J_i$~\cite{levy2024classical}.

Given $M$ i.i.d.\ shots described by records $r_1, \ldots, r_M$, we estimate the reduced Choi state by the empirical mean
\begin{equation}
  \hat{J}_i := \frac{1}{M}\sum_{k=1}^M \hat{\zeta}_{i}(r_k),
  \label{eq:ji}
\end{equation}
so that $\mathbb{E}[\hat{J}_i]=J_i$. The distance of $\hat{J}_i$ from $J_i$ can be bounded using the following theorem.
\begin{theorem}[Sample complexity of ShadowQPT, Theorem II.1 of Ref.~\cite{levy2024classical}]
\label{thm:shadowqpt-local}
    Let $\Lambda$ be an $n$-qubit channel and let $J(\Lambda)$ be its normalized Choi state. Let $X_i \subseteq [n]$ be a subsystem comprising at most $k$ qubits, where $i$ is an index that enumerates all such subsystems. We denote by $J_i = \tr_{[n] \setminus X_i} [J(\lambda)]$ the corresponding reduced Choi state and by $\hat{J}_i$ the estimator obtained according to Eq.~\eqref{eq:ji} with $M$ samples. There exists a constant $C>0$ such that for any $\eta,\delta\in(0,1)$,
    \begin{equation}
        \label{eq:shadowqpt-local}
        M \ge C \frac{144^{k}}{\eta^2} \ln \left[ \frac{(24n)^{2k}}{\delta} \right]
    \end{equation}
    implies
    \begin{equation}
        \frac12\bigl\|\hat{J}_i-J_i\bigr\|_1 \le \eta \quad
    \end{equation}
    for all $i$ with probability at least $1-\delta$.
\end{theorem}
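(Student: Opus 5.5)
We prove the statement for a single fixed subsystem $X_i$ and then take a union bound over all subsystems of at most $k$ qubits.

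For fixed $i$, the estimator $\hat{J}_i$ is an empirical mean of i.i.d.\ operator-valued random variables $\hat{\zeta}_i(r_k)$ with mean $J_i$. We expand in the normalized Pauli basis on the doubled space of $X_i$, with $4^{k}$ input and $4^{k}$ output Pauli factors, so $16^k$ operators in total. Write $\hat{J}_i - J_i = \sum_{P} (\hat{c}_P - c_P) P / 4^{k}$, where the sum runs over Pauli strings $P$ on the $2|X_i|$ qubits.

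Each coefficient estimator is a product of single-qubit factors $\tr[\tau(\mu,b)\sigma_\alpha]$. These take values in $\{1, \pm 3, 0\}$, because $\tau$ is the inverse of the local depolarizing-type twirl. Hence each single-shot coefficient is bounded in magnitude by $3^{|P|}\le 3^{2k} = 9^k$. Hoeffding's inequality then gives, for each $P$,
\[
|\hat{c}_P - c_P| \le t \quad\text{with failure probability at most}\quad 2\exp\!\big(-M t^2/(2\cdot 81^k)\big).
\]

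Next we convert the coefficient errors into a trace-norm error. By Cauchy--Schwarz,
\[
\|\hat{J}_i - J_i\|_1 \le 2^{k}\,\|\hat{J}_i-J_i\|_2 = 2^{k}\Big(\sum_P |\hat{c}_P-c_P|^2/4^{k}\Big)^{1/2} \le 2^{k}\cdot 4^{k/2}\, t .
\]
The middle equality uses the Hilbert--Schmidt normalization $\tr[PP']=4^k\delta_{PP'}$ on the $2k$-qubit space. It therefore suffices to choose $t = \eta/4^{k}$, which makes $\tfrac12\|\hat J_i - J_i\|_1\le \eta$. Plugging this $t$ into Hoeffding, each coefficient requires
\[
M \gtrsim 81^k\cdot 16^k\,\eta^{-2}\ln(\cdot)
\]
samples. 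This is cruder than $144^k$. To reach the stated $144^k = (16\cdot 9)^k$, we would instead use a variance (Bernstein-type) bound. The second moment of each single-qubit factor averaged over bases is $3$ per nontrivial Pauli, giving $9^k$ for the in and out factors combined, and this should replace the worst-case $81^k$. Alternatively, one can invoke the matrix-Bernstein argument of Ref.~\cite{levy2024classical} directly. Since the statement cites Theorem II.1 there, the cleanest route is to take that single-subsystem guarantee as given. I would present the Pauli-coefficient argument only as a sketch showing where $144^k$ comes from.

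Finally, we take a union bound. The number of subsystems of size at most $k$ is at most $\sum_{j\le k}\binom{n}{j}\le n^k$, and each subsystem has at most $16^k$ Pauli coefficients. Requiring every individual failure probability to be at most $\delta/(n^k 16^k)$ puts a logarithm of the form $\ln(2\cdot 16^k n^k/\delta)\le \ln[(24n)^{2k}/\delta]$ into the sample bound; the constant $C$ absorbs the factor $2$ and the slack. The main obstacle is getting the base $144$ exactly rather than a larger constant. I would try the variance-based Bernstein bound first and fall back on citing the reference for the precise constant.
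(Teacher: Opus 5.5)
The paper gives no proof of this statement. It imports Theorem II.1 of Ref.~\cite{levy2024classical} as a black box, so your Pauli-coefficient argument is a genuinely different, self-contained route. Your Hoeffding version alone does not prove the statement as written, since it only gives base $81^k\cdot 16^k=1296^k$. The Bernstein refinement you sketch does close the gap, so commit to it rather than falling back on the citation. The fact to state explicitly is this: for $\alpha\neq 0$, $|\tr[\sigma_\alpha\tau(\mu,b)]|^2=9\,\delta_{\alpha\mu}$, and for $\alpha=0$ it equals $1$; the transpose on the input factor only flips signs. So the squared single-shot coefficient depends only on the basis labels, not on the outcomes. Because the $2|X_i|$ basis labels are independent and uniform, $\mathbb{E}[\hat c_P^2]=3^{|P|}\le 9^k$. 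This holds even though input and output outcomes on a site are correlated through $\Lambda$, which is why multiplying per-factor second moments is legitimate. Now apply Bernstein's inequality with variance proxy $v=9^k$, range $R\le 9^k+1$ (using $|c_P|\le\|J_i\|_1=1$) and $t=2\eta/4^k$. It requires $M\ge 2(v+Rt/3)\,t^{-2}\ln(2N/\delta)$. Since $Rt/3\le v$ for $\eta<1$, this is implied by $M\ge 144^k\eta^{-2}\ln(2N/\delta)$. The number of coefficient events is $N\le\sum_{j\le k}\binom{n}{j}16^j$, and $\ln(2N/\delta)\le\ln[(24n)^{2k}/\delta]$. The statement therefore follows with an absolute constant.

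As for what each route buys: the citation is short and delegates all the bookkeeping. Your derivation is tailored to the plain empirical mean of Eq.~\eqref{eq:ji}, which is why it needs Bernstein, using both the variance $3^{|P|}$ and the range $3^{|P|}$, rather than a variance-only argument. It also shows where the constant comes from: $144^k=9^k\cdot 16^k$ is the per-coefficient variance times the $4^{2k}$ loss from converting coefficientwise errors on $2k$ qubits into trace norm via $\|X\|_1\le 2^k\|X\|_2$. Finally, it yields a slightly smaller logarithm, of order $\ln[(16n)^k/\delta]$.
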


\subsection{Proof of Corollary~\ref{thm:sample-complexity}\label{app:proof_corollary}}

\begin{proof}
    Corollary~\ref{thm:sample-complexity} follows by combining Theorem~\ref{thm:distance} with Theorem~\ref{thm:shadowqpt-local}. Theorem~\ref{thm:distance} requires estimating the reduced Choi states on $k = 2 w + 1$ sites with an error $\eta \leq (a/2) e^{-w/2\xi}$. By substituting these values into Eq.~\eqref{eq:shadowqpt-local}, we obtain
    \begin{equation}
        M \geq C' e^{( 4 \ln 12 + 1/\xi) w} \left[ \ln \left( \frac{1}{\delta} \right) + (4 w + 2) \ln \left( 24 n \right) \right],
    \end{equation}
    where $C' = 4 \times 144 C / a^2$. Next, we insert the condition $w \geq 4 \xi \ln[(b n / \varepsilon) \ln (b n / \varepsilon)]$ from Theorem~\ref{thm:distance}, which yields
    \begin{equation}
        M \geq C' \left[ \frac{b n}{\varepsilon} \ln \left( \frac{b n}{\varepsilon} \right) \right]^{16 \ln 12 \, \xi + 4} \left[ \ln \left( \frac{1}{\delta} \right) + 16 \xi \ln \left( \frac{b n}{\varepsilon} \ln \left( \frac{b n}{\varepsilon} \right) \right) \ln(24 n) + 2 \ln (24 n) \right] .
        \label{eq:m_intermediate}
    \end{equation}
    Using the (crude) bound $\ln (b n / \varepsilon) \leq b n / \varepsilon$, we see that Eq.~\eqref{eq:m_intermediate} is clearly of the form of Eq.~\eqref{eq:m} in Corollary~\ref{thm:sample-complexity}.
\end{proof}

\section{Diamond norm and Choi state bounds}
\label{trace_dis2diamond}
Our guarantees in Theorem~\ref{thm:end-to-end-main} are stated in terms of the trace distance between the global Choi states and corresponding process matrix. In many applications, however, the metric for comparing quantum channels is the diamond norm. Here we relate our Choi/process matrix bounds to diamond norm error. 

Let $A$ be the input system with $\dim(A)=D=d^n$ and define $\Delta:=\Lambda-\Lambda'$.
The diamond norm is~\cite{watrous2018theory}
\[
\|\Delta\|_\diamond
  = \max_{\rho_{AR}} \bigl\|(\Delta\otimes \mathbb{I}_R)(\rho_{AR})\bigr\|_1,
\]
where $R$ is an ancilla with $\dim(R)=\dim(A)=D$ and the maximization is over density operators
$\rho_{AR}\ge 0$ with $\mathrm{Tr}(\rho_{AR})=1$.

Let $J(\Lambda)$ denote the Choi state. A standard inequality relates the diamond norm to the trace distance of Choi states:
\begin{equation}
\label{eq:diamond-choi}
\|\Lambda-\Lambda'\|_\diamond
  \le D\,\bigl\|J(\Lambda)-J(\Lambda')\bigr\|_1 ,
\end{equation}
where an exponential prefactor $D = d^n$ appears in the inequality.

We put additional assumption of $\Lambda$ and $\Lambda'$ that are Pauli channels, i.e.\ their process matrices
are diagonal in the Pauli basis with diagonal entries $\chi_{a,a}$ and $\hat\chi_{a,a}$. Define $\delta_a:=\chi_{a,a}-\hat\chi_{a,a}$. Then
\[
  \Delta(\rho)=\sum_a \delta_a\,P_a\rho P_a^\dagger.
\]
Consequently, for any $\rho_{AR}$,
\[
  (\Delta\otimes \mathbb{I}_R)(\rho_{AR})
  =\sum_a \delta_a\,(P_a\otimes I_R)\,\rho_{AR}\,(P_a\otimes I_R).
\]
Using the triangle inequality and unitary invariance of the trace distance,
\[
\begin{aligned}
  \bigl\|(\Delta\otimes \mathbb{I}_R)(\rho_{AR})\bigr\|_1
  &\le \sum_a |\delta_a|\,
     \bigl\|(P_a\otimes I_R)\,\rho_{AR}\,(P_a\otimes I_R)\bigr\|_1 \\
  &= \sum_a |\delta_a|\,\|\rho_{AR}\|_1
   = \sum_a |\delta_a|,
\end{aligned}
\]
since $\|\rho_{AR}\|_1=\mathrm{Tr}(\rho_{AR})=1$. Maximizing over $\rho_{AR}$ yields
\[
  \|\Lambda-\Lambda'\|_\diamond=\|\Delta\|_\diamond \le \sum_a |\delta_a|.
\]
Finally, because $\chi$ and $\hat\chi$ are diagonal,
\[
  \|\chi-\hat\chi\|_1=\sum_a |\chi_{a,a}-\hat\chi_{a,a}|=\sum_a |\delta_a|,
\]
and therefore
\begin{equation}
  \|\Lambda-\Lambda'\|_\diamond \le \|\chi-\hat\chi\|_1 .
\end{equation}
By the Choi state to process matrix transformation, we can also guarantee in terms of the Choi states:
\begin{equation}
  \|\Lambda-\Lambda'\|_\diamond \le c\|J(\Lambda)-J(\Lambda')\|_1 ,
\end{equation} 
up to a constant factor $c$. Hence, for Pauli channels we show that the exponential prefactor in Eq.~\ref{eq:diamond-choi} can be removed.

\section{Numerical details}

All numerical codes were implemented in Python. Convex optimization problems
were formulated using the CVXPY modelling framework and solved with the
SCS solver (Splitting Conic Solver), a first-order primal--dual method for
large-scale conic optimization~\cite{o2016conic}. For each local step $k$,
we determine a completely positive trace-preserving (CPTP) map
$\mathcal{N}_{k\to k,k+1}$ by minimising the Frobenius distance between the
resulting reconstructed operator
$\hat J_{B_{k-1}B_kB_{k+1}}(\mathcal{N}_{k\to k,k+1})$ and the target operator
$J_{B_{k-1}B_kB_{k+1}}$,
\begin{equation}
  L^\star
  \;=\;
  \min_{\mathcal{N}_{k\to k,k+1}}
  \bigl\|
    \hat J_{B_{k-1}B_kB_{k+1}}(\mathcal{N}_{k\to k,k+1})
    - J_{B_{k-1}B_kB_{k+1}}
  \bigr\|_F ,
\end{equation}
subject to the CPTP constraints
\begin{equation}
  \mathrm{Tr}_{(B_kB_{k+1})_{\mathrm{out}}}
  \bigl[\mathcal{N}_{k\to k,k+1}\bigr]
  = \mathbb{I}_{(B_k)_{\mathrm{in}}},
  \qquad
  \mathcal{N}_{k\to k,k+1} \succeq 0.
\end{equation}
 SCS applies a first-order primal--dual splitting algorithm and
iteratively updates the optimization variables until the residuals of the
Karush--Kuhn--Tucker (KKT) conditions~\cite{boyd2004convex} fall below a numerical
tolerance. In our simulations we use the tolerance
$\varepsilon_{\mathrm{scs}} = 10^{-4}$ and a
maximum of $2500$ iterations. For further details about convex optimization, we refer
to Ref.~\cite{o2016conic}.

For the open-system dynamics, we use the open quantum system TEBD algorithms implemented in OQUPy to simulate $50$ qubits spin chain. The system Lindbladian equation is shown in Eq.~\ref{Master_equation}. We conduct the TEBD with time step $\Delta t = 10^{-3}$, using the first-order Suzuki-Trotter decomposition~\cite{suzuki1976generalized}. During the TEBD propagation, we compress the bond dimensions by singular value decomposition, and discard all singular values smaller than a relative threshold $\epsilon_{rel} = 10^{-7}$ with respect to the largest singular value. 

In the noisy setting, we expand the exact marginal Choi state in the Pauli basis
\begin{align}
J_{B_{k-1}B_kB_{k+1}}
&=
\sum_{\mu_{k-1},\mu_k,\mu_{k+1}=0}^{3}
\sum_{\nu_{k-1},\nu_k,\nu_{k+1}=0}^{3}
c_{\boldsymbol{\mu},\boldsymbol{\nu}}
\Bigl(
\sigma_{\mu_{k-1}}^{\mathrm{in}}\!\otimes
\sigma_{\mu_k}^{\mathrm{in}}\!\otimes
\sigma_{\mu_{k+1}}^{\mathrm{in}}
\Bigr)
\otimes
\Bigl(
\sigma_{\nu_{k-1}}^{\mathrm{out}}\!\otimes
\sigma_{\nu_k}^{\mathrm{out}}\!\otimes
\sigma_{\nu_{k+1}}^{\mathrm{out}}
\Bigr),
\\
c_{\boldsymbol{\mu},\boldsymbol{\nu}}
&=
\frac{1}{2^{6}}\,
\mathrm{Tr}\!\left[
\left(
\sigma_{\mu_{k-1}}^{\mathrm{in}}\!\otimes
\sigma_{\mu_k}^{\mathrm{in}}\!\otimes
\sigma_{\mu_{k+1}}^{\mathrm{in}}
\otimes
\sigma_{\nu_{k-1}}^{\mathrm{out}}\!\otimes
\sigma_{\nu_k}^{\mathrm{out}}\!\otimes
\sigma_{\nu_{k+1}}^{\mathrm{out}}
\right)
J_{B_{k-1}B_kB_{k+1}}
\right],
\end{align}
where the factor $2^{6}$ follows from the orthogonality relation
$\mathrm{Tr}(P_\alpha P_\beta)=2^{6}\delta_{\alpha\beta}$ for Pauli strings on six qubits.
To model noise, we perturb each Pauli coefficient by additive i.i.d.\ Gaussian noise
$\xi_{\alpha}\sim\mathcal{N}(0,\sigma^{2})$, i.e., $\tilde c_{\alpha}=c_{\alpha}+\xi_{\alpha}$,
and form $\tilde J_{B_{k-1}B_kB_{k+1}}=\sum_{\alpha}\tilde c_{\alpha} P_{\alpha}$.
We then project $\tilde J_{B_{k-1}B_kB_{k+1}}$ onto the set of physical states by imposing 
hermiticity, setting negative eigenvalues to zero, and renormalizing to unit trace. We sweep $\sigma$ from $1.78\times 10^{-4}$ to $1.78\times 10^{-2}$.

\end{document}